\documentclass{lmcs}
\pdfoutput=1

\usepackage{lastpage}
\lmcsdoi{15}{3}{24}
\lmcsheading{}{\pageref{LastPage}}{}{}%
{Jan.~25,~2018}{Aug.~29,~2019}{}

\keywords{Analog computability, GPAC, differential equations}
\usepackage{amsmath}
\usepackage{amssymb}
\usepackage{amsthm}
\usepackage[utf8]{inputenc}
\usepackage{tikz}
\usepackage{float}
\usepackage{indentfirst}
\usepackage{caption}
\usepackage{subcaption}
\usepackage{hyperref}

%


\makeatletter
\DeclareRobustCommand{\varint}{%
	\mathop{\mathpalette\var@integral\relax}\nolimits
}
\newdimen\var@height
\newcommand{\var@integral}[2]{%
	\sbox\z@{$#1\int$}%
	\var@height=\dimexpr\ht\z@+\dp\z@\relax
	\vcenter{\hbox{%
			\includegraphics[height=\var@height]{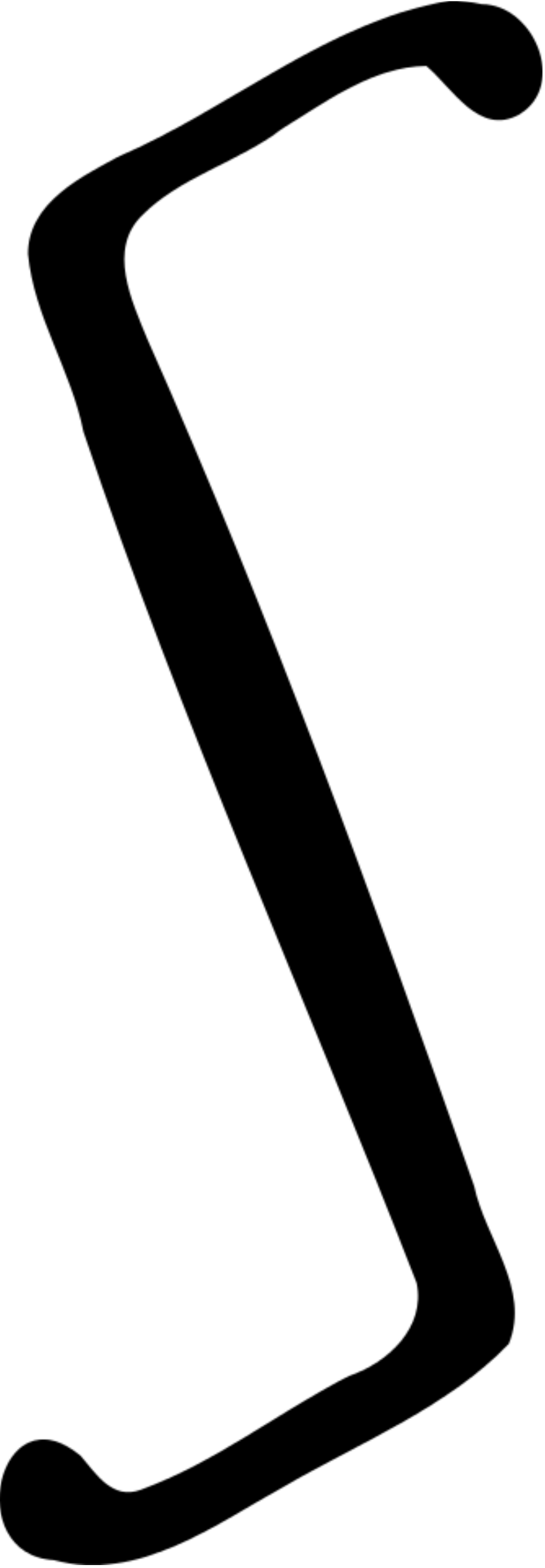}%
		}}%
	}
	\makeatother



\usetikzlibrary{decorations.shapes}
\usetikzlibrary{shapes.geometric}


\newcommand{\ds}{\displaystyle}

\newcommand{\id}{\operatorname{id}}

\newcommand{\ccal}{\mathcal{C}}

\newcommand{\gcal}{\mathcal{G}}
\newcommand{\ical}{\mathcal{I}}
\newcommand{\lcal}{\mathcal{L}}
\newcommand{\mcal}{\mathcal{M}}

\newcommand{\ocal}{\mathcal{O}}

\newcommand{\xcal}{\mathcal{X}}

\newcommand{\nbb}{\mathbb{N}}
\newcommand{\rbb}{\mathbb{R}}

\newcommand{\tbb}{\mathbb{T}}

\newcommand{\psnfamily}{\{\|\cdot\|_n\}_{n\in\nbb}}

\begin{document}

\title{Approximability in the GPAC}
\author[D. Poças]{Diogo Poças\rsuper{a}}
\address{\lsuper{a}Operations Research, Technical University of Munich}
\email{diogo.pocas@tum.de}

\author[J. Zucker]{Jeffery Zucker\rsuper{b}}
\address{\lsuper{b}Department of Computing and Software, McMaster University}


\begin{abstract}

Most of the physical processes arising in nature are modeled by differential equations, either ordinary (example: the spring/mass/damper system) or partial (example: heat diffusion). From the point of view of analog computability, the existence of an effective way to obtain solutions (either exact or approximate) of these systems is essential.

A pioneering model of analog computation is the General Purpose Analog Computer (GPAC), introduced by Shannon as a model of the Differential Analyzer and improved by Pour-El, Lipshitz and Rubel, Costa and Graça and others. The GPAC is capable of manipulating real-valued data streams. Its power is known to be characterized by the class of differentially algebraic functions, which includes the solutions of initial value problems for ordinary differential equations.

We address one of the limitations of this model, concerning the notion of approximability, a desirable property in computation over continuous spaces that is however absent in the GPAC. In particular, the Shannon GPAC cannot be used to generate non-differentially algebraic functions such as the gamma function, which can be approximately computed in other models of computation. We extend the class of data types using networks with channels which carry information on a general complete metric space $\xcal$; for example $\xcal=C(\rbb,\rbb)$, the class of continuous functions of one real (spatial) variable.

We consider the original modules in Shannon's construction (constants, adders, multipliers, integrators) and we add \emph{(continuous or discrete) limit} modules which have one input and one output. For input $u$, they output the continuous limit $\ds g=\lim_{t\rightarrow\infty}u(t)$ or the discrete limit $\ds g=\lim_{n\rightarrow\infty}u_n$.

We then define an L-GPAC to be a network built with $\xcal$-stream channels and the above-mentioned modules. This leads us to a framework in which the specifications of such analog systems are given by fixed points of certain operators on continuous data streams. Such a framework was considered by Tucker and Zucker. We study these analog systems and their associated operators, and show how some classically non-generable functions, such as the gamma function and the Riemann zeta function, can be captured with the L-GPAC.
\end{abstract}

\maketitle

\section{Introduction}

Analog computation, as conceived by Kelvin \cite{thompsontait:80}, Bush \cite{bush:31}, and Hartree \cite{hartree:50}, is a form of experimental computation with physical systems called analog devices or analog computers. Historically, data are represented by measurable physical quantities, including lengths, shaft rotation, voltage, current, resistance, etc., and the analog devices that process these representations are made from mechanical, electromechanical or electronic components \cite{small:93,holst:96,johansson:96}.

A general purpose analog computer (GPAC) was introduced by Shannon \cite{shannon:41} as a model of Bush’s Differential Analyzer \cite{bush:31}. Shannon discovered that a function can be generated by a GPAC if, and only if, it is differentially algebraic, but his proof was incomplete. A basic study was made by Pour-El \cite{pourel:74} who gave some good characterizations of the class of analog computable functions, focusing on the classical analog systems built from constants, adders, multipliers and integrators. This yielded a stronger model and a new proof of the Shannon’s equivalence (and some new gaps, corrected by Lipshitz, Rubel \cite{lipshitzrubel:87}, Graça and Costa \cite{gracacosta:03}). Using this characterization in terms of algebraic differential equations, Pour-El showed that not all computable functions on the reals (in the sense of computable analysis) can be obtained with these analog networks. The well-known counterexample is the gamma function
\[\Gamma(t)=\int_0^\infty x^{t-1}e^{-x}dx\]
which is not differentiable algebraic and so cannot be generated by a GPAC, as noted by Shannon himself. However, one could expect that, in a ``sensible'' model of computability on continuous data, this function would be computable.

Indeed, the gamma function is \emph{effectively computable} in the sense of computable analysis, a branch of constructive mathematics studied by Grzegorczyk \cite{grzegorczyk:55,grzegorczyk:57}, Lacombe \cite{lacombe:55a,lacombe:55b,lacombe:55c}, Pour-El, Richards \cite{pourelrichards:79}, Weihrauch \cite{weihrauch:00}, Tucker, Zucker \cite{tuckerzucker:07}, among others. These researchers have in one way or another tried to answer what is perhaps the fundamental question for analog computation: which functions are computable? For the case of digital computation, all empirical evidence corroborates the celebrated Church-Turing Thesis, showing that various models (such as Turing machines, $\lambda$-calculus and recursive functions) are equivalent. This picture is not so clear for analog computability on continuous spaces, despite the abundance of progress made and (partial) equivalence results \cite{olivieretal:06,ko:91,hansentucker:99,weihrauch:00}.

Returning to the Shannon GPAC and the (non-)computability of the gamma function, some researchers have attempted to include \emph{approximability} in the model (which is an important ingredient in many models of real computation such as computable analysis); in particular, Graça \cite{graca:04} redefined the notion of GPAC-computability in order to show that the gamma function can indeed be considered as GPAC-computable.

We are interested in defining models of computation via analog networks that extend well-known results into spaces of functions of several variables. In our previous paper \cite{pocaszucker:16} we considered a modest family of linear evolution problems on Fréchet spaces. We studied how to describe solutions to these problems as fixed points of certain analog networks and, inspired by Cauchy-Kowalevski theorems, we attempted to produce such fixed points via iterating sequences. In a later paper \cite{pocaszucker:18} we presented an extension to the Shannon GPAC model, which we called $X$-GPAC, that can reason about functions of more than one variable. In that paper we considered a \emph{differential module} (that produces spatial derivatives) and we established an equivalence theorem, characterizing the class of $X$-GPAC-generable functions in terms of solutions to \emph{partial differential algebraic systems of equations}.

In this paper we present a different extension to the Shannon GPAC model. In particular, we wish to incorporate the procedure of taking limits into our model of analog networks. In abstract terms, one may want to define a class of `computable' elements $\ccal$ such that

\begin{center}If $f\in\ccal$, then $\lim f\in\ccal$.\end{center}

Of course, part of the problem is understanding what kinds of `limit' we are allowed to consider. Usually, in computability theory on continuous spaces, we must demand that limits be `effective', in the sense that the modulus of convergence is known \text{a priori} and thus we can effectively obtain an approximation to the limit within a prescribed precision. The notion of limit must also agree with the topology of the underlying space, which can be induced by a metric, a norm, or a family of pseudonorms. Thus if $\xcal$ is a function space we may be interested in `uniform' or `locally uniform' as opposed to `pointwise' limits.

The paper is structured as follows. We begin by introducing the Shannon GPAC and how we intend to extend it with additional channel types. We then define the notions of Cauchy sequence, Cauchy stream and effective convergence. With those ingredients, we are able to consider a new module that takes (discrete or continuous) limits and an extension to the Shannon GPAC, which we call L-GPAC. Afterwards we prove some auxiliary results and briefly discuss other interesting (yet equivalent) possibilities for limits and limit modules. In the latter part of the paper, we show how to generate some non-differentially algebraic functions, such as the gamma and Riemann zeta functions, which are our main motivation for including limits.

We briefly summarize the original content of this paper. It is important to remark that the idea of introducing approximability into the GPAC model is not new and can attributed to Graça, \cite{graca:04}. In particular, the paper \cite{bournezetal:07} provides a notion of GPAC-computability also based on limits, remarkably showing an equivalence with the class of computable functions on a compact interval. However, we claim that (to the best of our knowledge) the approach of including a \emph{module} that performs limits is original. In our framework, approximability is incorporated on the GPAC model itself, and not just in the way we define the GPAC semantics. This is how we suggest our results be contrasted to those of Bournez, Campagnolo, Graça, Hainry and other authors. Therefore, the main original content of this paper consists in the introduction of limit modules (Definition \ref{def:limitmodules}) and the notion of L-GPAC (Definition \ref{def:lgpac}); the computability of the gamma and Riemann zeta functions (Theorems \ref{thm:compgammafun} and \ref{thm:compzetafun}) can be seen as applications of this theory.

\section{Channels and modules}\label{sec:disctype}

The main objects of our study are \emph{analog networks} or \emph{analog systems}, \cite{tuckerzucker:07, tuckerzucker:11, jameszucker:13, tuckerzucker:14}, whose main components are described as follows:

\begin{center}
\emph{Analog network} = \emph{data} + \emph{time} + \emph{channels} + \emph{modules}.
\end{center}

We can model \emph{data} as elements of a complete metric vector space $\xcal$, such as a Banach or Fréchet space. We will use the nonnegative real numbers as a continuous model of \emph{time} $\tbb=[0,\infty)$. There are two types of \emph{channels} we can consider: a \emph{scalar channel} carries a constant value $x\in\xcal$, whereas a \emph{stream channel} carries a continuously differentiable stream, represented as a function $u:\tbb\rightarrow \xcal$ (this space is denoted by $C^1(\tbb,\xcal)$). Each \emph{module} $M$ has zero, one or more input channels, and must have a single output channel; thus it can be specified by a (possibly partially defined) \emph{stream function}
\[F_M:\xcal^k\times C^1(\tbb,\xcal)^\ell\rightharpoonup C^1(\tbb,\xcal)\quad (k,\ell\geq 0).\]
In the case that $\xcal=\rbb$, we obtain the Shannon GPAC. We can use four types of modules to describe this model, which are equivalent to the ones originally used by Shannon.

\begin{defi}[\textbf{Shannon modules}]\label{def:smods} The \emph{Shannon modules} are defined as follows:

\begin{itemize}
\item for each $c\in\rbb$, there is a \emph{constant module} with zero inputs and one output $v$, given by
\[v(t)=c;\]
\item the \emph{adder module} has two inputs $u$, $v$ and one output $w$, given by
\[w(t)=u(t)+v(t);\]
\item the \emph{multiplier module} has two inputs $u$, $v$  and one output $w$, given by
\[w(t)=u(t)v(t);\]
\item the \emph{integrator module} has a scalar input $c$ (also called initial setting), two stream inputs $u$, $v$ and one output $w$, given by the Riemann-Stieltjes integral
\[w(t)=c+\int_0^t u(s)v'(s)ds;\]
\end{itemize}
\end{defi}
We can depict the four Shannon modules in box diagrams, as in Figure \ref{fig:shannonmodules}. We also introduce the symbol `$\varint$' to denote the operator associated with the integrator module, in order to differentiate from the actual integral; we can then write $\varint(c,u,v)=c+\int udv$.

\begin{figure}[ht]
\flushleft
\hspace{1.2cm}
\vspace{.5cm}
\begin{tikzpicture}
\draw (-.5, .5) -- ( .5, .5) -- ( .5,-.5) -- (-.5,-.5) -- (-.5, .5);
\draw[->] (  .5, 0) -- (1.5, 0);

\node at ( 0,0  ) {$c$}; 
\node at ( 1,.25) {$c$};
\node[anchor=west] at (2.5, .5) {$c:\rightarrow C^1(\tbb,\rbb)$};
\node[anchor=west] at (2.5,-.5) {$c(t)=c$};
\end{tikzpicture}\\
\hspace{0.3cm}
\vspace{.5cm}
\begin{tikzpicture}
\draw (-.5, .5) -- ( .5, .5) -- ( .5,-.5) -- (-.5,-.5) -- (-.5, .5);
\draw[->] (-1.5, .25) -- (-.5, .25);
\draw[->] (-1.5,-.25) -- (-.5,-.25);
\draw[->] (.5, 0) -- (1.5, 0);

\node at ( 0, 0 ) {$+$};
\node at (-1, .5) {$u$};
\node at (-1, 0 ) {$v$};
\node at ( 1,.25) {$u+v$};
\node[anchor=west] at (2.5, .5) {$+:C^1(\tbb,\rbb)\times C^1(\tbb,\rbb)\rightarrow C^1(\tbb,\rbb)$};
\node[anchor=west] at (2.5,-.5) {$+(u,v)(t)=u(t)+v(t)$};
\end{tikzpicture}\\
\hspace{0.3cm}
\vspace{.5cm}
\begin{tikzpicture}
\draw (-.5, .5) -- ( .5, .5) -- ( .5,-.5) -- (-.5,-.5) -- (-.5, .5);
\draw[->] (-1.5, .25) -- (-.5, .25);
\draw[->] (-1.5,-.25) -- (-.5,-.25);
\draw[->] (.5, 0) -- (1.5, 0);

\node at ( 0, 0 ) {$\times$};
\node at (-1, .5) {$u$};
\node at (-1, 0 ) {$v$};
\node at ( 1,.25) {$uv$};
\node[anchor=west] at (2.5, .5) {$\times:C^1(\tbb,\rbb)\times C^1(\tbb,\rbb)\rightarrow C^1(\tbb,\rbb)$};
\node[anchor=west] at (2.5,-.5) {$\times(u,v)(t)=u(t)v(t)$};
\end{tikzpicture}\\
\hspace{0.3cm}
\vspace{.5cm}
\begin{tikzpicture}
\draw (-.5, .5) -- ( .5, .5) -- ( .5,-.5) -- (-.5,-.5) -- (-.5, .5);
\draw[->] (-1.5, .375) -- (-.5, .375);
\draw[->] (-1.5, 0  ) -- (-.5, 0  );
\draw[->] (-1.5,-.375) -- (-.5,-.375);
\draw[->] (  .5, 0  ) -- (1.5, 0  );

\node at ( 0 , 0 ) {$\varint$};
\node at (-1 , .5  ) {$c$};
\node at (-1 , .125) {$u$};
\node at (-1 ,-.25 ) {$v$};
\node at (1.3,.25) {$c+\int udv$};
\node[anchor=west] at (2.5, .5) {$\varint: \rbb\times C^1(\tbb,\rbb)\times C^1(\tbb,\rbb)\rightarrow C^1(\tbb,\rbb)$};
\node[anchor=west] at (2.5,-.5) {$\varint(c,u,v)(t)=c+\int_0^t u(s)dv(s)$};
\end{tikzpicture}\caption{The four Shannon modules.\label{fig:shannonmodules}}
\end{figure}
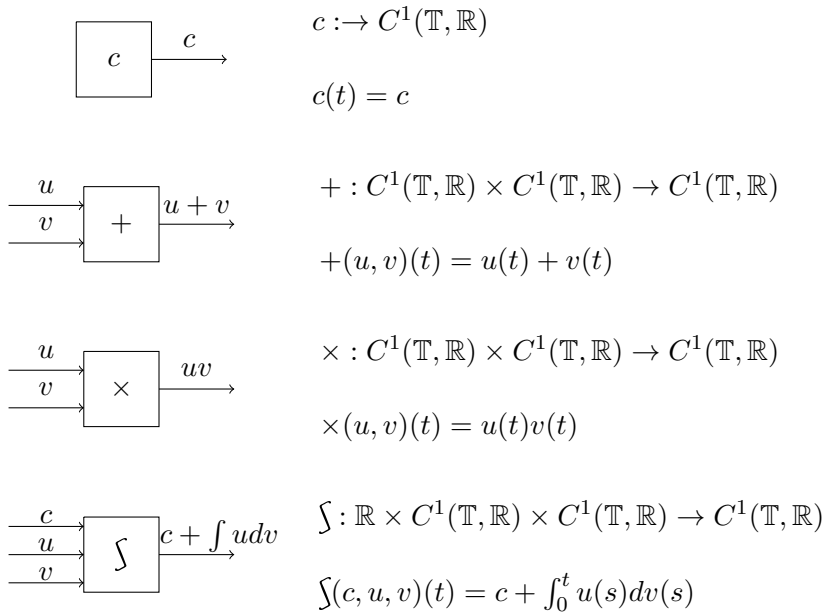

The continuous differentiability of the streams $u,v$ ensures that the integrator module is well defined; indeed, the Riemann-Stieltjes integral is well defined for continuous integrand and continuously differentiable integrator. In other words, for any $u,v\in C^1(\tbb,\rbb)$, the formula $\int_0^t u(s)dv(s)$ defines a function in $C^1(\tbb,\rbb)$.

In a previous paper \cite{pocaszucker:18} we presented an extension of the Shannon GPAC, which we called $\xcal$-GPAC, that allowed the study of functions of more than one variable. The main idea present in that paper is to extend the \emph{output space}, that is, replacing $C^1(\tbb,\rbb)$ with $C^1(\tbb,\xcal)$, where $\xcal$ is a metric vector space. For example, we can think of $\xcal$ as the space of continuous real-valued functions on $\rbb^n$, that is, $\xcal=C(\rbb^n,\rbb)$. In this way, our channels will now carry $\xcal$-valued streams of data $u:\tbb\rightarrow \xcal$, which correspond to functions of $n+1$ real variables, under the ``uncurrying''
\[\tbb\rightarrow(\rbb^n\rightarrow\rbb)\simeq\tbb\times\rbb^n\rightarrow\rbb.\]

It is evident that one of the independent variables, namely the ``time'' variable, plays a different role from the others. Our approach is, to some extent, motivated by the theory of partial differential equations, in which some fundamental problems (such as the heat equation, wave equation and Schrödinger equation) can be expressed as time evolution problems in a function space.

In this paper we will present another extension of the Shannon GPAC, which can be described as a \emph{multityped GPAC}. We consider a metric vector space of the form $\xcal=C(\Omega,\rbb)$, where $\Omega$ is a closed interval in $\rbb$ (either bounded or unbounded). This model comes equipped with four channel types:

\begin{itemize}
\item \emph{$\rbb$-scalar} channels, which carry a constant $k\in\rbb$;
\item \emph{$\xcal$-scalar} channels, which carry a constant $x\in \xcal$;
\item \emph{$\rbb$-stream} channels, which carry a stream $a\in C^1(\tbb,\rbb)$;
\item \emph{$\xcal$-stream} channels, which carry a stream $u\in C^1(\tbb,\xcal)$.
\end{itemize}
The reason for introducing a variety of channel types will become clear when we introduce the limit module (Definition \ref{def:limitmodules}). We also observe that the basic Shannon modules (constants, adders, multipliers and integrators) can be easily generalized from $\rbb$-channels into $\xcal$-channels.

\begin{rem}[\textbf{Discrete channel types}] We mention in passing that \emph{discrete} channel types could also be considered. If we wished to do so, the resulting model could be seen as a hybrid between discrete and continuous computation. The addition of more channels would undoubtedly increase the difficulty of studying the power of the GPAC; we make the important remark that these discrete channel types are not essential to the main purpose of this paper, which is to generate some non-differentially algebraic functions. Therefore they are included only as an illustration. In any case, here are the further channel types we may wish to consider:

\begin{itemize}
\item \emph{$\nbb$-scalar} channels, which carry a constant $k\in\nbb$;
\item \emph{$\nbb$-sequence} channels, which carry a sequence $\{k_n\}\in\nbb^\nbb$;
\item \emph{$\rbb$-sequence} channels, which carry a sequence $\{k_n\}\in\rbb^\nbb$;
\item \emph{$\xcal$-sequence} channels, which carry a sequence $\{g_n\}\in \xcal^\nbb$.
\end{itemize}
We remark that the channel type corresponding to $\nbb$-streams is not necessary, since any continuous function of type $\tbb\rightarrow\nbb$ must be constant.
\end{rem}

\section{The limit operator and the limit GPAC}

Let us make precise what we mean by effective limit. If we take $\xcal$ to be a complete metric space with a metric $d$, then

\begin{itemize}
\item a sequence $\{g_n\}\in \xcal^\nbb$ is a \emph{Cauchy sequence} whenever
\[\text{for all }\epsilon>0\text{ there exists }N\in\nbb\text{ such that for }m,n\in\nbb\text{ with }m,n\geq N\text{ one has }d(g_m,g_n)<\epsilon;\]
\item a stream $u\in C^1(\tbb,\xcal)$ is a \emph{Cauchy stream} whenever
\[\text{for all }\epsilon>0\text{ there exists }T\in\tbb\text{ such that for }s,t\in\tbb\text{ with }s,t\geq T\text{ one has }d(u(s),u(t))<\epsilon;\]
\end{itemize}
To write the effective version of these limits, we begin by replacing the existential quantifiers with functions on the precision $\epsilon$. A possible approach is given in the following definitions.

\begin{defi}[\textbf{Moduli of convergence}]\label{def:modconv}~
\begin{enumerate}
\item A \emph{discrete modulus of convergence} is a nondecreasing function $N:\nbb\rightarrow\nbb$.
\item A \emph{continuous modulus of convergence} is a nondecreasing function $T\in C^1(\tbb,\rbb)$.
\end{enumerate}
\end{defi} 

\begin{rem}[\textbf{Effective moduli of convergence}]\label{rmk:effmodconv}
Both definitions of moduli of convergence can be effectivized in an intuitive manner. To effectivize the notion of discrete modulus of convergence $N:\nbb\rightarrow\nbb$, we can require that $N$ be computable (in the traditional sense). To effectivize the notion of continuous modulus of convergence $T\in C^1(\tbb,\rbb)$, we can require that $T$ be GPAC-generable. In the latter case we may further specify what type of GPAC we are interested: either the Shannon GPAC or the limit GPAC which we will develop in this paper. However, for most of the time we will desire $T$ to be a somewhat ``simple'' function, such as a monomial, an exponential, or a chain of exponentials, in which case the notion of Shannon GPAC-generability suffices. In fact, we expect the computational richness of the construction to be in the stream for which we are taking limits, but not on the modulus of convergence itself.
\end{rem}

\begin{defi}[\textbf{Effective limits on metric spaces}]\label{def:efflimmetric}~
\begin{enumerate}
\item Let $N$ be a discrete modulus of convergence and $\{g_n\}\in \xcal^\nbb$. Then $\{g_n\}$ is an \emph{$N$-convergent Cauchy sequence} if
\[\text{for all }\nu\in\nbb\text{, for all }m,n\in\nbb\text{ with }m,n\geq N(\nu)\text{ one has }d(g_m,g_n)<2^{-\nu}.\]
\item Let $T$ be a continuous modulus of convergence and $u\in C^1(\tbb,\xcal)$. Then $u$ is a \emph{$T$-convergent Cauchy stream} if
\[\text{for all }\tau\in\tbb\text{, for all }s,t\in\tbb\text{ with }s,t\geq T(\tau)\text{ one has }d(u(s),u(t))<2^{-\tau}.\]
\item A sequence $\{g_n\}\in \xcal^\nbb$ is called an \emph{effective Cauchy sequence} if there is an effective discrete modulus of convergence $N$ such that $\{g_n\}$ is $N$-convergent.
\item A stream $u\in C^1(\tbb,\xcal)$ is called an \emph{effective Cauchy stream} if there is an effective continuous modulus of convergence $T$ such that $u$ is $T$-convergent.
\end{enumerate}
\end{defi}
\noindent An example of a modulus of convergence is given by the identity function, either discrete ($\id:\nbb\rightarrow\nbb$) or continuous ($\id\in C^1(\tbb,\rbb)$). We note that any effective Cauchy sequence may be effectively replaced by an $\id$-convergent Cauchy sequence via a composition with its modulus of convergence; in other words, if $\{g_n\}$ is an $N$-convergent Cauchy sequence, then $\{g_{N(n)}\}$ is an $\id$-convergent Cauchy sequence. Similarly, an effective Cauchy stream may be effectively replaced by an $\id$-convergent Cauchy stream. Thus we may assume, for convenience, that the modulus of convergence for a given effective limit is given by the identity map.

The final ingredient of our construction is the notion of a \emph{limit operator}, and again this may be done in a discrete or continuous manner.

\begin{defi}[\textbf{Limit modules}]\label{def:limitmodules}~
\begin{enumerate}
\item For the data type $\xcal$, there is a \emph{discrete limit module} with one input of type $\xcal^\nbb$ and one output of type $\xcal$. For input $\{g_n\}$, it outputs the $\id$-convergent limit $\ds\lim_{n\rightarrow\infty}g_n$ (if it exists).
\item For the data type $\xcal$, there is a \emph{continuous limit module} with one input of type $C^1(\tbb,\xcal)$ and one output of type $\xcal$. For input $u$, it outputs the $\id$-convergent limit $\ds\lim_{t\rightarrow\infty}u(t)$ (if it exists).
\end{enumerate}
\end{defi}

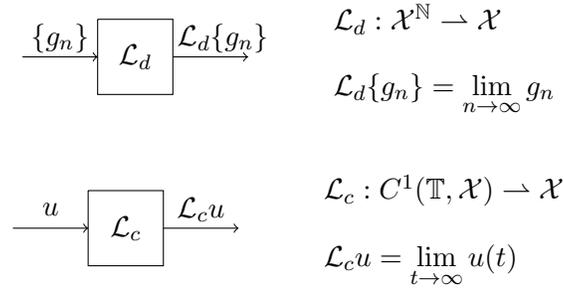
\begin{figure}[ht]\centering
\begin{tikzpicture}
\draw (-.5, .5) -- ( .5, .5) -- ( .5,-.5) -- (-.5,-.5) -- (-.5, .5);
\draw[->] (-1.5,0) -- (-.5,0);
\draw[->] (.5, 0) -- (1.5, 0);

\node at ( 0 , 0) {$\lcal_d$};
\node at (-1,.25) {$\{g_n\}$};
\node at ( 1.1,.25) {$~\lcal_d \{g_n\}$};
\node[anchor=west] at (2.5, .5) {$\lcal_d:\xcal^\nbb\rightharpoonup \xcal$};
\node[anchor=west] at (2.5,-.5) {$\ds\lcal_d \{g_n\}=\lim_{n\rightarrow\infty}g_n$};
\end{tikzpicture}\\
\vspace{.5cm}
\begin{tikzpicture}
\draw (-.5, .5) -- ( .5, .5) -- ( .5,-.5) -- (-.5,-.5) -- (-.5, .5);
\draw[->] (-1.5,0) -- (-.5,0);
\draw[->] (.5, 0) -- (1.5, 0);

\node at ( 0 , 0) {$\lcal_c$};
\node at (-1,.25) {$u$};
\node at ( 1,.25) {$\lcal_c u$};
\node[anchor=west] at (2.5, .5) {$\lcal_c:C^1(\tbb,\xcal)\rightharpoonup \xcal$};
\node[anchor=west] at (2.5,-.5) {$\ds\lcal_c u=\lim_{t\rightarrow\infty}u(t)$};
\end{tikzpicture}\caption{Limit modules.\label{fig:limitmodules}}
\end{figure}

A few comments are in order. Firstly, it should be clear that the limit modules define partial-valued operators; they are only defined for those sequences in $\xcal^\nbb$ (or those functions in $C^1(\tbb,\xcal)$) that have an $\id$-convergent limit. Secondly, the choice of the identity as the `canonical' modulus of convergence allows us to specify the limit operator as a one-input, one-output module. A different approach could be taken, in which a \emph{two-input limit module} is considered, having one input for the sequence (or stream) and another input for the discrete (or continuous) modulus of convergence, such as in Figure \ref{fig:limittwoinputs}.

\begin{figure}[ht]\centering
\begin{tikzpicture}
\draw (-.5, .5) -- ( .5, .5) -- ( .5,-.5) -- (-.5,-.5) -- (-.5, .5);
\draw[->] (-1.5, .25) -- (-.5, .25);
\draw[->] (-1.5,-.25) -- (-.5,-.25);
\draw[->] (.5, 0) -- (1.5, 0);

\node at ( 0 , 0) {$\tilde{\lcal}_d$};
\node at (-1,.5) {$\{g_n\}$};
\node at (-1, 0 ) {$N$};
\node at ( 1.1,.25) {$~\tilde{\lcal}_d \{g_n\}$};
\node[anchor=west] at (2.5, .5) {$\tilde{\lcal}_d:\xcal^\nbb\times\nbb^\nbb\rightharpoonup \xcal$};
\node[anchor=west] at (2.5,-.5) {$\ds\tilde{\lcal}_d (\{g_n\},N)=\lim_{n\rightarrow\infty}g_n$};
\end{tikzpicture}\\
\vspace{.5cm}
\begin{tikzpicture}
\draw (-.5, .5) -- ( .5, .5) -- ( .5,-.5) -- (-.5,-.5) -- (-.5, .5);
\draw[->] (-1.5, .25) -- (-.5, .25);
\draw[->] (-1.5,-.25) -- (-.5,-.25);
\draw[->] (.5, 0) -- (1.5, 0);

\node at ( 0 , 0) {$\tilde{\lcal}_c$};
\node at (-1,.5) {$u$};
\node at (-1, 0 ) {$T$};
\node at ( 1,.25) {$\tilde{\lcal}_c u$};
\node[anchor=west] at (2.5, .5) {$\tilde{\lcal}_c:C^1(\tbb,\xcal)\times C^1(\tbb,\rbb)\rightharpoonup \xcal$};
\node[anchor=west] at (2.5,-.5) {$\ds\tilde{\lcal}_c (u,T)=\lim_{t\rightarrow\infty}u(t)$};
\end{tikzpicture}\caption{Two-input limit modules.\label{fig:limittwoinputs}}
\end{figure}
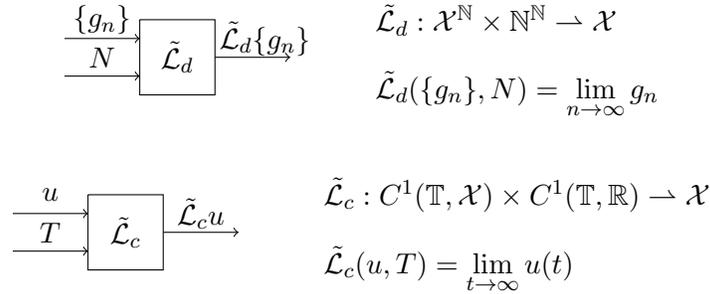

Since, as explained above, any effective limit may be converted to an $\id$-convergent limit via a composition with the modulus of convergence, we can derive the two-input limit module from the one-input limit module using a composition, as depicted on Figure \ref{fig:limitcomposition}.

\begin{figure}[ht]\centering
\raisebox{-.5\height}{\begin{tikzpicture}
\draw (-.5, 1.25) -- ( .5, 1.25) -- ( .5,  .25) -- (-.5,  .25) -- (-.5, 1.25);
\draw (-.5,- .25) -- ( .5,- .25) -- ( .5,-1.25) -- (-.5,-1.25) -- (-.5,- .25);
\draw (1.5,  .5 ) -- (2.5,  .5 ) -- (2.5,- .5 ) -- (1.5,- .5 ) -- (1.5,  .5 );
\draw[->] (-1.5, 0  ) -- (-1  , 0  ) -- (-1  , .75) -- (-.5, .75);
\draw[->] (-1  , 0  ) -- (-1  ,-.75) -- (-.5 ,-.75);
\draw[->] (  .5, .75) -- ( 1  , .75) -- ( 1  , .25) -- (1.5, .25);
\draw[->] (  .5,-.75) -- ( 1  ,-.75) -- ( 1  ,-.25) -- (1.5,-.25);
\draw[->] ( 2.5, 0  ) -- ( 4.5, 0  );

\node at ( 0   , .75) {$u$};
\node at ( 0   ,-.75) {$T$};
\node at ( 2   , 0  ) {$\tilde{\lcal}_c$};
\node at (-1.25, .25) {$t$};
\node at (  .75, 1  ) {$u$};
\node at (  .75,-.5 ) {$T$};
\node at ( 3.5 , .25) {$\tilde{\lcal}_c(u,T)$};
\end{tikzpicture}}
\hspace{.5cm}
\raisebox{-.5\height}{\begin{tikzpicture}
\draw (-.5, .5) -- ( .5, .5) -- ( .5,-.5) -- (-.5,-.5) -- (-.5, .5);
\draw (1.5, .5) -- (2.5, .5) -- (2.5,-.5) -- (1.5,-.5) -- (1.5, .5);
\draw (3.5, .5) -- (4.5, .5) -- (4.5,-.5) -- (3.5,-.5) -- (3.5, .5);
\draw[->] (-1.5,0) -- (-.5,0);
\draw[->] (  .5,0) -- (1.5,0);
\draw[->] ( 2.5,0) -- (3.5,0);
\draw[->] ( 4.5,0) -- (5.5,0);

\node at ( 0, 0 ) {$T$};
\node at ( 2, 0 ) {$u$};
\node at ( 4, 0 ) {$\lcal_c$};
\node at (-1,.25) {$t$};
\node at ( 1,.25) {$T$};
\node at ( 3,.25) {$u\circ T$};
\node at (5.3,.25) {$~\lcal_c(u\circ T)$};
\end{tikzpicture}}
\caption[Derivation of the two-input continuous limit module.]{Derivation of the two-input continuous limit module; the discrete case is done similarly.\label{fig:limitcomposition}}
\end{figure}
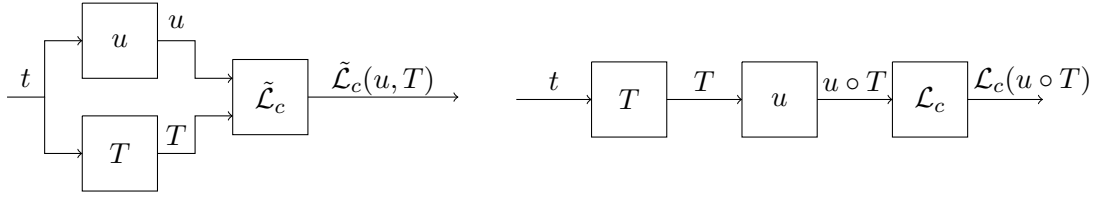

\begin{defi}[\textbf{L-GPAC}]\label{def:lgpac}
A \emph{limit general purpose analog computer} (L-GPAC) is a network built with:
\begin{itemize}
\item $\rbb$-channels and $\xcal$-channels (carrying either constants or streams);
\item the basic modules (constants, adders, multipliers, integrators) and the one-input continuous limit module.
\end{itemize}
Moreover, the channels connect the inputs and outputs of the modules, with the following restrictions:
\begin{itemize}
\item the only connections allowed are between an output and an input;
\item each input may be connected to either zero or one output;
\end{itemize}
\end{defi}

\begin{rem}[\textbf{L-GPAC semantics}]\label{rmk:lgpac}
We must mention that some non-obvious choices were made in Definition \ref{def:lgpac}:

\begin{itemize}
\item Should we have included the discrete channel types from Section \ref{sec:disctype} and the discrete limit module from Definition \ref{def:limitmodules}?
\item Should we have opted for the two-input instead of one-input limit modules?
\item Should we have considered various notions of effective moduli of convergence (cf. Remark~\ref{rmk:effmodconv})?
\end{itemize}
Clearly, as we increase the variety (in both channel types and modules) of our construction, we get more inclusive models of computation, but finding characterization results becomes increasingly difficult and technical. Keeping in mind that our goal is to compute some non-differentially algebraic functions such as the gamma function and the Riemann zeta function, we can limit our construction to the minimum that makes that goal achievable. As will be seen in Sections \ref{sec:gammafun} and \ref{sec:compzetafun}, this can be accomplished by considering only continuous channel types and a one-input continuous limit module.
\end{rem}

We can follow the same program as in \cite{pocaszucker:18} and define the notions of induced operator, well-posedness, L-GPAC semantics and L-GPAC-generability which follow from the notion of L-GPAC. We shall only sketch this construction; the interested reader may wish to consult \cite{pocaszucker:18} for more details.

\begin{defi}[\textbf{L-GPAC semantics}]\label{def:lgpacsemantics}
Given an L-GPAC $\gcal$,
\begin{enumerate}
\item we denote the \emph{induced operator} (also called \emph{input-output operator}) by $\Phi:\ical\times\mcal\rightharpoonup\mcal\times\ocal$, where $\ical$ corresponds to \emph{proper input} channels, $\ocal$ corresponds to \emph{proper output} channels and $\mcal$ corresponds to \emph{mixed input/output} channels;
\item the \emph{fixed point equation} is given by
\begin{equation}\label{eq:lgpacfp} \Phi(\textbf{inp},\textbf{mix})=(\textbf{mix},\textbf{out}),\end{equation}
where $\textbf{inp}\in\ical$, $\textbf{mix}\in\mcal$, $\textbf{out}\in\ocal$;
\item consider an open subset $U\subseteq\ical$; we say that $\gcal$ is \emph{well-posed} on $U$ if for all $\textbf{inp}\in U$ there is a unique $(\textbf{mix},\textbf{out})$ such that \eqref{eq:lgpacfp} holds; and moreover the map $\textbf{inp}\mapsto(\textbf{mix},\textbf{out})$ is continuous;
\item if $\gcal$ is well-posed on $U$, then we say it \emph{generates} the function $F:\ical\rightharpoonup\mcal\times\ocal$ with domain $U$ such that $(\textbf{inp},F(\textbf{inp}))$ solves the fixed point equation \eqref{eq:lgpacfp}; we also say that $F$ is \emph{L-GPAC-generable}.
\end{enumerate}
\end{defi}

\begin{exa}\label{ex:lgpac}
To achieve a better understanding of Definition \ref{def:lgpacsemantics} we provide an example of an L-GPAC with one constant and four channels, as in Figure \ref{fig:lgpacex}. 

\begin{figure}[ht]
\centering
\begin{tikzpicture}
\draw (-.75, .75) -- ( .75, .75) -- ( .75,-.75) -- (-.75,-.75) -- (-.75, .75);
\draw (1.25, .75) -- (1.75, .75) -- (1.75, .25) -- (1.25, .25) -- (1.25, .75);
\draw (2.25, .75) -- (3.25, .75) -- (3.25,-.25) -- (2.25,-.25) -- (2.25, .75);
\draw[->] (-1.25, .5 ) -- (-.75, .5 );
\draw[->] (-1.25,-.5 ) -- (-.75,-.5 );
\draw[->] (  .75,0   ) -- (2.25,0   );
\draw[->] ( 1.5 ,0   ) -- (1.5 ,-1  ) -- (-1.75,-1  ) -- (-1.75, 0  ) -- (-.75, 0  );
\draw[->] ( 1.75, .5 ) -- (2.25, .5 );
\draw[->] ( 3.25, .25) -- (3.75, .25);

\node at ( 0  , 0 ) {\Large $\varint$};
\node at (1.5 , .5) {$x$};
\node at (2.75,.25) {\Large $\times$};
\node at (-1 , .75) {$g$};
\node at (-1 ,-.25) {$u_1$};
\node at (-1 , .25) {$u_2$};
\node at ( 2 , .25) {$u_2$};
\node at ( 2 , .75) {$u_3$};
\node at (3.5, .5 ) {$u_4$}; 
\end{tikzpicture}\caption{Example of an L-GPAC.\label{fig:lgpacex}}
\end{figure}
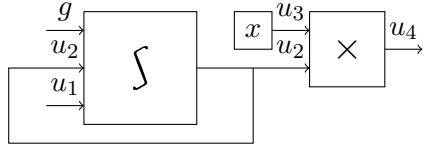

We assume that $\xcal=C(\rbb,\rbb)$ so that channels $u\in C^1(\tbb,\xcal)$ can be seen as functions of two variables, $u=u(t,x)$. Whenever $u$ is continuously differentiable (with respect to its first variable), we can write $u'(t,x)$ to denote its `time' derivative. Therefore, the integrator module over $C^1(\tbb,\xcal)$ has the following semantics: for an initial setting $g\in\xcal$ and stream inputs $u,v\in C^1(\tbb,\xcal)$, it outputs the stream $w\in C^1(\tbb,\xcal)$ given by
\[w(t,x)=g(x)+\int_0^tu(s,x)v'(s,x)ds.\]

In this example there is one constant input (labeled $g$, associated with the integrator module), one stream input, (labeled $u_1$), two mixed channels (labeled $u_2$ and $u_3$) and one output channel (labeled $u_4$). The induced operator simply formalizes the input/output relation between these channels,
\[\ical=\xcal\times C^1(\tbb,\xcal),\quad\mcal=C^1(\tbb,\xcal)^2,\quad\ocal=C^1(\tbb,\xcal);\]
\[\Phi:\ical\times\mcal\rightharpoonup\mcal\times\ocal\]
\[\Phi(g,u_1,u_2,u_3)=\left(g+\int u_2du_1,x,u_2u_3\right)=(\tilde{u_2},\tilde{u_3},u_4).\]
The corresponding fixed point equation is then given by the system
\begin{equation}u_2=g+\int_0^t u_2du_1,\quad u_3=x,\quad u_4=u_2 u_3;\end{equation}
which after some calculations can be seen to yield the solution
\begin{equation}\label{eq:lgpacexsol}u_2(t,x)=g(x)e^{u_1(t,x)-u_1(0,x)},\quad u_3(t,x)=x,\quad u_4(t,x)=xg(x)e^{u_1(t,x)-u_1(0,x)}.\end{equation}
Thus, this L-GPAC (actually a GPAC) is well-posed for any $u_1\in C^1(\tbb,\xcal)$ and $g\in\xcal$. It generates the function $F:\xcal\times C^1(\tbb,\xcal)\rightharpoonup C^1(\tbb,\xcal)^3$ given by $F(g,u_1)=(u_2,u_3,u_4)$, as in \eqref{eq:lgpacexsol}.
\end{exa}

\section{Preliminaries}

In this section we present some auxiliary results that are used in the later sections of the paper. We recall that a \emph{Fréchet space} is a vector space $\xcal$ equipped with a countable family of pseudonorms $\psnfamily$ such that $\xcal$ is complete with respect to $\psnfamily$ (that is, for all sequences $(x_m)$ such that $(x_m)$ is Cauchy with respect to each pseudonorm $\|\cdot\|_n$, there exists $x\in \xcal$ such that $(x_m)$ converges to $x$ with respect to each pseudonorm $\|\cdot\|_n$).\footnote{A detailed exposition of Fréchet spaces can be found in \cite[Chapter V]{reedsimons:80}.}

\begin{prop}[\textbf{Metric from family of pseudonorms}]\label{prop:indmetric} Let $\xcal$ be a Fréchet space and ${\psnfamily}$ a corresponding family of pseudonorms. Let $\gamma:\rbb_{\geq0}\rightarrow[0,1]$ be a continuous function which is also positive definite, nondecreasing and subadditive, that is,
\begin{itemize}
\item $\gamma(0)=0$ and for all $t>0$ we have $0<\gamma(t)\leq 1$;
\item for all $t_1,t_2\in\rbb_{\geq0}$ such that $t_1\leq t_2$ we have $\gamma(t_1)\leq\gamma(t_2)$;
\item for all $t_1,t_2\in\rbb_{\geq0}$ we have $\gamma(t_1+t_2)\leq\gamma(t_1)+\gamma(t_2)$.
\end{itemize}
Let $\{w_n\}_{n\in\nbb}$ be a summable family of positive weights, that is, $\ds\sum_{n=0}^\infty w_n<\infty$. Then we can define a metric on $\xcal$ by
\begin{equation} d(x,y)=\sum_{n=0}^\infty w_n\gamma(\|x-y\|_n).\end{equation}

Moreover, this metric induces the same topology over $\xcal$ and $\xcal$ is complete under it.
\end{prop}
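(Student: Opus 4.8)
The plan is to verify the metric axioms directly, then to prove a single bridging fact comparing the $d$-topology with the pseudonorm topology by directly relating their neighbourhood bases; both the topology claim and completeness fall out of that comparison.

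First I would check that $d$ is well defined and genuinely a metric. Since $\gamma\leq 1$ and $\sum_{n=0}^\infty w_n<\infty$, the defining series is dominated termwise by $\sum_n w_n$, so it converges and $d(x,y)\in[0,\sum_n w_n]$. Symmetry is immediate from $\|x-y\|_n=\|y-x\|_n$. For the identity of indiscernibles, $d(x,y)=0$ forces every summand to vanish (all terms are nonnegative and each $w_n>0$), so $\gamma(\|x-y\|_n)=0$ for all $n$; positive definiteness of $\gamma$ gives $\|x-y\|_n=0$ for all $n$, and since the pseudonorms of a Fréchet space separate points this yields $x=y$. The triangle inequality is the one axiom that uses the hypotheses on $\gamma$: from $\|x-z\|_n\leq\|x-y\|_n+\|y-z\|_n$, monotonicity gives $\gamma(\|x-z\|_n)\leq\gamma(\|x-y\|_n+\|y-z\|_n)$, subadditivity bounds the right side by $\gamma(\|x-y\|_n)+\gamma(\|y-z\|_n)$, and multiplying by $w_n$ and summing yields $d(x,z)\leq d(x,y)+d(y,z)$.

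Next I would show the two topologies coincide by comparing bases at an arbitrary point $x$: the metric balls $B_\delta(x)$ versus the pseudonorm neighbourhoods $V_{N,\epsilon}(x)=\{y:\|y-x\|_n<\epsilon\text{ for all }n\leq N\}$. To see that each $B_\delta(x)$ contains some $V_{N,\epsilon}(x)$, I would first pick $N$ with $\sum_{n>N}w_n<\delta/2$ (the summable tail), controlling the infinite part of the series by $\gamma\leq 1$; then, using continuity of $\gamma$ at $0$ together with $\gamma(0)=0$, choose $\epsilon$ so small that $\gamma(\epsilon)\sum_{n\leq N}w_n<\delta/2$, giving $d(x,y)<\delta$ on $V_{N,\epsilon}(x)$. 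For the reverse inclusion, that each $V_{N,\epsilon}(x)$ contains some $B_\delta(x)$, I would use that $\gamma(\epsilon)>0$ (positive definiteness) and set $\delta=\min_{n\leq N}w_n\gamma(\epsilon)>0$: then $d(x,y)<\delta$ forces $\gamma(\|x-y\|_n)<\gamma(\epsilon)$ for each $n\leq N$, whence $\|x-y\|_n<\epsilon$ by monotonicity of $\gamma$. This establishes that $d$ induces the pseudonorm topology.

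For completeness, the same two estimates transfer Cauchyness in both directions: a $d$-Cauchy sequence is Cauchy in every pseudonorm (by the $\delta=w_n\gamma(\epsilon)$ estimate), so by completeness of the Fréchet space it converges to some $x\in\xcal$ in each pseudonorm, and the head/tail estimate then upgrades this to $d(x_m,x)\to 0$; hence $(\xcal,d)$ is complete. The step I expect to be the main obstacle is precisely this topology comparison: one must simultaneously control the \emph{infinite} tail of the series uniformly (which needs $\gamma\leq 1$ and summability of the weights) while \emph{recovering each individual pseudonorm} from the single metric value (which needs $\gamma$ to stay bounded away from $0$ on $[\epsilon,\infty)$, i.e. monotonicity plus positive definiteness). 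Every hypothesis on $\gamma$ and on $\{w_n\}$ is consumed at exactly this point, and once the two inclusions are in hand, both the topology statement and completeness are essentially immediate.
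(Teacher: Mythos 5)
Your proof is correct. The paper itself offers no argument here --- it simply remarks that the proofs of this proposition and the next are straightforward --- so there is nothing to compare against; your write-up supplies the standard argument the authors presumably had in mind: metric axioms via monotonicity and subadditivity of $\gamma$, the topology comparison via a head/tail split of the series (tail controlled by $\gamma\leq 1$ and summability, head controlled by continuity of $\gamma$ at $0$, and the reverse inclusion by $\gamma(\epsilon)>0$), and completeness transferred through those same two estimates. One small point worth making explicit: the step from $\|x-y\|_n=0$ for all $n$ to $x=y$ requires that the pseudonorm family be separating, a condition the paper's definition of Fr\'echet space leaves implicit; without it $d$ is only a pseudometric, so you are right to invoke it, and it is the paper's statement rather than your argument that quietly depends on it.
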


\begin{prop}[\textbf{Bounds on the pseudonorms and bounds on the metric}]\label{prop:boundpnormmetric}
Let $\xcal$ be a Fréchet space with pseudonorms $\|\cdot\|_n$, $n\in\nbb^+$. Let $d$ be the metric on $\xcal$ given by
\begin{equation}\label{eq:dfromp}d(x,y)=\sum_{n=1}^\infty 2^{-n}\min(\|x-y\|_n,1).\end{equation}
\begin{enumerate}
\item Let $0<\epsilon<1$ and $M\in\nbb$. Then, for any $\delta\leq\epsilon 2^{-M}$ and $x,y\in \xcal$, one has
\begin{center}if $d(x,y)<\delta$, then $\|x-y\|_n<\epsilon$ for $n=1,\ldots,M$.\end{center}
\item Let $0<\epsilon<1$. Then for any $\delta\leq\epsilon/2$ and $M\in\nbb$ such that $2^{-M}\leq\epsilon/2$ and $x,y\in \xcal$, one has
\begin{center}if $\|x-y\|_n<\delta$ for $n=1,\ldots,M$, then $d(x,y)<\epsilon$.\end{center}
\end{enumerate}
\end{prop}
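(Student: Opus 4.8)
The plan is to prove each of the two implications by carefully splitting the defining series \eqref{eq:dfromp} into a "low-index" head (terms $n = 1, \ldots, M$) and a "high-index" tail (terms $n > M$), and then bounding each piece separately. The key structural fact I would exploit throughout is that every summand $2^{-n}\min(\|x-y\|_n, 1)$ is nonnegative and bounded above by $2^{-n}$, so the tail $\sum_{n>M} 2^{-n} \min(\|x-y\|_n,1)$ is always at most $\sum_{n>M} 2^{-n} = 2^{-M}$, regardless of the points $x,y$. This single observation handles the "uncontrolled" indices in both directions.

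For part (1), I would argue by contraposition within a single index. Suppose $d(x,y) < \delta$ with $\delta \leq \epsilon 2^{-M}$, and fix any $n \in \{1, \ldots, M\}$. Since every term of the series is nonnegative, the individual term $2^{-n}\min(\|x-y\|_n, 1)$ is bounded above by the full sum $d(x,y) < \delta \leq \epsilon 2^{-M}$. Hence $\min(\|x-y\|_n, 1) < \epsilon 2^{-M} \cdot 2^{n} \leq \epsilon 2^{-M} \cdot 2^{M} = \epsilon$, using $n \leq M$. Because $\epsilon < 1$, the inequality $\min(\|x-y\|_n, 1) < \epsilon$ forces the minimum to be realized by $\|x-y\|_n$ itself (the constant $1$ cannot be the smaller value when the minimum is below $1$), so $\|x-y\|_n < \epsilon$, as required. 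I would note that taking the term-by-term bound is precisely what lets us isolate a single pseudonorm from the aggregated metric.

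For part (2), which I expect to be the marginally more delicate direction, the hypothesis controls the head but not the tail, so I would combine both estimates. Assume $\|x-y\|_n < \delta$ for $n = 1, \ldots, M$, with $\delta \leq \epsilon/2$ and $M$ chosen so that $2^{-M} \leq \epsilon/2$. Split the series as
\[
d(x,y) = \sum_{n=1}^{M} 2^{-n}\min(\|x-y\|_n, 1) + \sum_{n=M+1}^{\infty} 2^{-n}\min(\|x-y\|_n, 1).
\]
For the head, each term satisfies $\min(\|x-y\|_n,1) \leq \|x-y\|_n < \delta \leq \epsilon/2$, so the head is bounded by $(\epsilon/2)\sum_{n=1}^M 2^{-n} < (\epsilon/2) \cdot 1 = \epsilon/2$. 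For the tail, the uniform bound gives $\sum_{n>M} 2^{-n}\min(\|x-y\|_n,1) \leq \sum_{n>M} 2^{-n} = 2^{-M} \leq \epsilon/2$. Adding the two contributions yields $d(x,y) < \epsilon/2 + \epsilon/2 = \epsilon$.

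The main obstacle, such as it is, is bookkeeping rather than conceptual: one must keep the two roles of the threshold $\epsilon$ straight (in part (1) it appears scaled by $2^{-M}$ to beat the low-index weights, whereas in part (2) it is split into two halves against head and tail), and one must justify in part (1) the passage from $\min(\|x-y\|_n,1) < \epsilon$ to $\|x-y\|_n < \epsilon$, which relies essentially on the standing assumption $\epsilon < 1$. No completeness or topological input is needed here; these are purely elementary estimates on the convergent geometric series defining $d$.
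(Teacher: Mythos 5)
Your proof is correct, and it is exactly the kind of argument the authors had in mind: the paper itself gives no proof of this proposition, dismissing it with the remark that ``the proofs of the above two propositions are straightforward.'' Your two estimates --- isolating a single term of the series (using nonnegativity and $\epsilon<1$ to drop the $\min$) for part (1), and the head/tail split with the geometric tail bound $\sum_{n>M}2^{-n}=2^{-M}$ for part (2) --- constitute the standard argument that the authors omitted, so there is nothing to contrast.
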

\noindent The proofs of the above two propositions are straightforward.

\begin{exa}\label{ex:inverter}
We exemplify the usefulness of the Shannon GPAC by building an \emph{inverter}, a partially defined function given by
\begin{equation}F:\rbb\times C^1([0,T],\rbb)\rightharpoonup C^1([0,T],\rbb);\quad F(k,b)(t)=\frac{k}{1+k(b(t)-b(0))}.\end{equation}

Let us show that $F$ is GPAC-generable by considering the GPAC in Figure \ref{fig:inverter}.

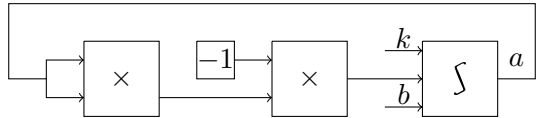
\begin{figure}[ht]
\centering
\begin{tikzpicture}
\draw (-.5, .5) -- ( .5, .5) -- ( .5,-.5) -- (-.5,-.5) -- (-.5, .5);
\draw (1  , .5) -- (1.5, .5) -- (1.5, 0 ) -- (1  , 0 ) -- (1  , .5);
\draw (2  , .5) -- (3  , .5) -- (3  ,-.5) -- (2  ,-.5) -- (2  , .5);
\draw (4  , .5) -- (5  , .5) -- (5  ,-.5) -- (4  ,-.5) -- (4  , .5);

\draw[->] ( .5,-.25) -- (2  ,-.25);
\draw[->] (1.5, .25) -- (2  , .25);
\draw[->] (3  , 0  ) -- (4  , 0  );
\draw[->] (3.5,-.375) -- (4  ,-.375);
\draw[->] (3.5, .375) -- (4  , .375);
\draw[->] (5  , 0  ) -- (5.5, 0  ) -- (5.5, 1  ) -- (-1.5,1) -- (-1.5,0) -- (-1,0) -- (-1,.25) -- (-.5,.25);
\draw[->] (-1 , 0  ) -- (-1 ,-.25) -- (-.5,-.25);

\node at ( 0   , 0  ) {$\times$};
\node at ( 1.25, .25) {$-1$};
\node at ( 2.5 , 0  ) {$\times$};
\node at ( 4.5 , 0  ) {$\varint$};
\node at ( 3.75, .55) {$k$};
\node at ( 3.75,-.2 ) {$b$};
\node at ( 5.25, .25) {$a$};
\end{tikzpicture}\caption{A GPAC generating the inverter functional.\label{fig:inverter}}
\end{figure}

This GPAC induces a system of four equations on six variables, which is reducible to a single equation on the channels labeled $k$, $a$ and $b$, given by
\[a'(t)=-a(t)^2b'(t),\quad a(0)=k;\]
after some calculations we find the unique solution to be
\[a(t)=\frac{k}{1+k(b(t)-b(0))}.\]

Therefore, $F$ is a (component of a) GPAC-generable partial function; its domain is given by
\[D(F)=\{(k,b)\in\rbb\times C^1(\tbb,\rbb):k=0\text{ or }b(t)\neq b(0)-1/k\text{ for all }t\in\tbb\}.\]

It is worth noticing that, when $k=1$ and $b=\mathbf{t}$ is linear time, the corresponding solution is $a(t)=\frac{1}{1+t}$, which provides an example for a GPAC-generable rational function.
\end{exa}

\section{Infinite speedup, infinite slowdown}\label{sec:infspeedslow}

The composition presented in Figure \ref{fig:limitcomposition} can be thought of as a \emph{time speedup} by $T$ (or \emph{slowdown}, if $T$ grows slower than the identity). The goal of this section is to observe that infinite speedups can also be expressed in our model. Thus, the choice of limit $t\rightarrow\infty$ is not the only possibility, as one may consider limits of the form $t\rightarrow T^-$ for any positive time $T\in\tbb$. In order to see this, we consider the following functions that continuously maps the interval $[0,1)$ to $[0,\infty)$ and vice versa.

\begin{prop}\label{prop:infinitespeedslow} The following functions are Shannon GPAC-generable:
\begin{enumerate}
\item $t\mapsto\frac{t}{1-t}$, with domain $[0,1)$ and range $[0,\infty)$;
\item $t\mapsto\frac{t}{1+t}$, with domain $[0,\infty)$ and range $[0,1)$. 
\end{enumerate}
\end{prop}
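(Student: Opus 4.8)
The plan is to recognize that both functions are rational, hence differentially algebraic, and then to realize each one explicitly as the output of a Shannon GPAC by exhibiting it as the solution of a polynomial initial value problem. The cleanest route is to reuse the inverter from Example \ref{ex:inverter}, since both target functions are visibly of inverter type after a small algebraic rearrangement.

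For part (2), I would proceed as follows. First I would set $a(t)=\frac{t}{1+t}=1-\frac{1}{1+t}$, and observe that $\frac{1}{1+t}$ is exactly the output of the inverter from Example \ref{ex:inverter} with the choices $k=1$ and $b(t)=\mathbf{t}$ (linear time), as noted at the end of that example. Thus $a(t)=1-F(1,\mathbf{t})(t)$, and since the constant module produces $1$, linear time $\mathbf{t}$ is produced by an integrator ($\mathbf{t}=\varint(0,1,\mathbf{t})$, i.e. $\int_0^t 1\,ds$), and the adder combines them, the whole function is Shannon GPAC-generable. Alternatively, and perhaps more transparently for a self-contained GPAC, I would note that $a(t)=\frac{t}{1+t}$ satisfies the polynomial ODE
\begin{equation*}
a'(t)=(1-a(t))^2,\quad a(0)=0,
\end{equation*}
which one verifies directly by differentiating; this is an algebraic differential equation and so is realizable by constants, adders, multipliers and integrators, placing $a$ squarely inside the class of GPAC-generable functions with the stated domain $[0,\infty)$ and range $[0,1)$.

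For part (1), I would mirror this. Writing $a(t)=\frac{t}{1-t}=\frac{1}{1-t}-1$ exhibits it again as an inverter-type function, now corresponding to $k=-1$ and $b(t)=\mathbf{t}$ (so that $F(-1,\mathbf{t})(t)=\frac{-1}{1-t}$) up to sign and an additive constant; alternatively one checks the polynomial ODE
\begin{equation*}
a'(t)=(1+a(t))^2,\quad a(0)=0,
\end{equation*}
whose unique solution on $[0,1)$ is indeed $\frac{t}{1-t}$, blowing up as $t\rightarrow 1^-$ so that the range is $[0,\infty)$. Either presentation shows the function is Shannon GPAC-generable.

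I do not expect a genuine obstacle here, since rational functions of time are the archetypal GPAC-generable functions and the heavy lifting was already done in Example \ref{ex:inverter}. The only points requiring mild care are bookkeeping rather than conceptual: verifying the domains and ranges match the stated intervals (in particular that the solution of part (1) is defined precisely on $[0,1)$ and exhausts $[0,\infty)$, and that part (2) is defined on all of $[0,\infty)$ with range exactly $[0,1)$), and checking that the two functions are genuine inverses of one another as the phrasing ``and vice versa'' suggests. The main thing to be careful about is the sign conventions when instantiating the inverter functional with $k=\pm 1$, so I would likely favor presenting the self-contained polynomial ODEs above, since they make the initial conditions and the blow-up behavior immediately transparent without tracking signs through Example \ref{ex:inverter}.
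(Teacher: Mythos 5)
Your proposal is correct and takes essentially the same route as the paper: the paper likewise reduces to the inverter of Example \ref{ex:inverter}, obtaining $\frac{1}{1-t}$ and $\frac{1}{1+t}$ as its outputs for $k=1$ with $b(t)=-t$ and $b(t)=t$ respectively, and then finishes with a single basic module (it multiplies by $t$, where you add or subtract the constant $1$). Your supplementary self-contained ODEs $a'=(1-a)^2$ and $a'=(1+a)^2$ with $a(0)=0$ are also valid and make the domain and blow-up bookkeeping transparent, but they amount to the same underlying construction.
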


\begin{proof}
Recall the inverter functional $\Phi:(k,b)\mapsto a(t)=\frac{k}{1+k(b(t)-b(0))}$ constructed in Example \ref{ex:inverter}. The function $s_{\uparrow}(t)=\frac{1}{1-t}$ can be obtained as the output of $\Phi$ with $k=1$ and $b(t)=-t$. The function $s_{\downarrow}(t)=\frac{1}{1+t}$ can be obtained as the output of $\Phi$ with $k=1$ and $b(t)=t$. The desired functions can then be obtained by multiplying $s_{\uparrow}$ or $s_{\downarrow}$ with $t$.\qedhere

\begin{figure}[ht]\centering
\raisebox{-.5\height}{\includegraphics[width=0.3\textwidth]{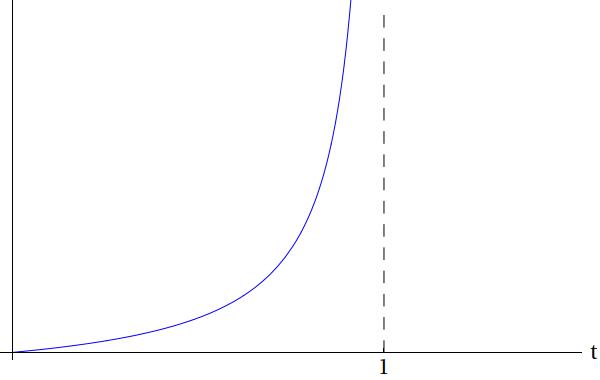}}
\raisebox{-.5\height}{\includegraphics[width=0.3\textwidth]{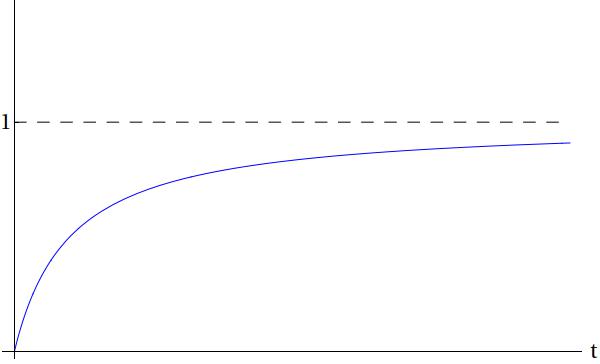}}
\caption[Infinite speedup and infinite slowdown.]{Plot of the functions $t\mapsto\frac{t}{1-t}$ (left) and $t\mapsto\frac{t}{1+t}$ (right).\label{fig:infinitespeedslow}}
\end{figure}
\end{proof}

Therefore, if we have a function $u(t)$ with a desired limit as $t\rightarrow\infty$, we can perform a composition of $u$ with the infinite speedup to obtain the desired limit as $t\rightarrow 1^-$. We must note that the reverse case is also possible; that is, if we consider a function that continuously maps the interval $[0,\infty)$ to $[0,1)$, then we can convert a limit as $t\rightarrow 1^-$ into a limit as $t\rightarrow+\infty$ via an infinite slowdown.

\section{Pseudonorm effectiveness}\label{sec:pseudoeffect}

Our construction of the limit module relies on the notion of effective limit, which is given by the metric associated to the underlying space $\xcal$. The advantage of this approach is that it requires only a minimal structure on $\xcal$ (complete metric space), and thus it can be applied quite generally. However, previous work \cite{pocaszucker:16,pocaszucker:18} provided evidence for the prevalence of Fréchet spaces in our research. Since the topology in these spaces is induced by a family of pseudonorms, we may desire to define a suitable notion of effective limits that takes this into consideration. Since a metric can be inferred from the pseudonorms (recall Proposition \ref{prop:indmetric}), we may expect some equivalence between both notions. In this section we formalize this argumentation.

\begin{defi}[\textbf{Moduli of convergence for pseudonorms}]\label{def:modconvpseudo}~
\begin{enumerate}
\item A \emph{discrete modulus of convergence for pseudonorms} is a function $N:\nbb\times\nbb\rightarrow\nbb$ such that for each $n\in\nbb$, $N(n,\cdot)$ is nondecreasing.
\item A \emph{continuous modulus of convergence for pseudonorms} is a function $T:\nbb\rightarrow C^1(\tbb,\rbb)$ such that for each $n\in\nbb$, $T(n)\in C^1(\tbb,\rbb)$ is nonnegative and nondecreasing.
\end{enumerate}
\end{defi}
\noindent Observe that for each $n\in\nbb$, the $n$-section of a (discrete or continuous) modulus of convergence for pseudonorms is itself a (discrete or continuous) modulus of convergence for the underlying space.

\begin{defi}[\textbf{Effective limits on Fréchet spaces}]\label{def:efflimfrechet}~
\begin{enumerate}
\item Let $N$ be a discrete modulus of convergence for pseudonorms and $\{g_n\}\in \xcal^\nbb$. Then $\{g_n\}$ is an \emph{$N$-Fréchet Cauchy sequence} (or an $N$-FC sequence) if
\[\text{for all }\nu\in\nbb,n\in\nbb\text{, for all }j,k\in\nbb\text{ with }j,k\geq N(n,\nu)\text{ one has }\|g_j-g_k\|_n<2^{-\nu}.\]
\item Let $T$ be a continuous modulus of convergence for pseudonorms and $u\in C^1(\tbb,\xcal)$. Then $u$ is a \emph{$T$-Fréchet Cauchy stream} (or a $T$-FC stream) if
\[\text{for all }\tau\in\tbb,n\in\nbb\text{, for all }s,t\in\tbb\text{ with }s,t\geq T(n,\tau)\text{ one has }\|u(s)-u(t)\|_n<2^{-\tau}.\]
\end{enumerate}
\end{defi}
\noindent For the following Lemma, we shall assume that the metric in $\xcal$ is induced by the pseudonorms as

\begin{equation}\label{eq:dfromp2}d(u,v)=\sum_{n\in\nbb}w_n\gamma(\|u-v\|_n),\quad\text{with }w_n=2^{-n}\text{ and }\gamma(t)=\min(t,1),\end{equation}
which satisfy the assumptions in Proposition \ref{prop:indmetric} (see also Proposition \ref{prop:boundpnormmetric}).

\begin{lem}[\textbf{Equivalence between effective limits}]\label{lem:convmodpseudoequiv}~
\begin{enumerate}
\item Let $N$ be a discrete modulus of convergence and $g\in \xcal^\nbb$ an $N$-convergent Cauchy sequence. Then $g$ is an $\tilde{N}$-FC sequence, where $\ds \tilde{N}(n,\nu)=N(n+\nu)$; moreover, if $N$ is computable, so is $\tilde{N}$.
\item Let $T$ be a continuous modulus of convergenge and $u\in C^1(\tbb,\xcal)$ a $T$-convergent Cauchy stream. Then $u$ is a $\tilde{T}$-FC stream, where $\tilde{T}(n,\tau)=T(n+\tau)$; moreover, if $T$ is GPAC-generable, so is $\tilde{T}(n)$ for each $n$.
\item Let $\tilde{N}$ be a discrete modulus of convergence for pseudonorms and $g\in \xcal^\nbb$ an $\tilde{N}$-FC sequence. Then $g$ is an $N$-convergent Cauchy sequence, where $\ds N(\nu)=\max_{n\leq\nu+1}\tilde{N}(n,\nu+1)$; moreover, if $\tilde{N}$ is computable, so is $N$.
\item Let $\tilde{T}$ be a continuous modulus of convergence for pseudonorms and $u\in C^1(\tbb,\xcal)$ a $\tilde{T}$-FC stream. Then $u$ is a $T$-convergent Cauchy stream, where \[T(\tau)=\max_{n\leq\tau+2}\tilde{T}(n,\tau+1).\]
\end{enumerate}
\end{lem}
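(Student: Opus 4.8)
The plan is to treat the lemma as two symmetric pairs. Parts (1)--(2) pass from metric convergence (the hypotheses are phrased via the metric $d$) to pseudonorm convergence, while parts (3)--(4) go the other way; within each pair the discrete and continuous cases are formally identical, differing only in whether the precision parameter is an integer $\nu$ or a real $\tau$. Everything reduces to the explicit metric \eqref{eq:dfromp2}, namely $d(x,y)=\sum_{n\geq1}2^{-n}\min(\|x-y\|_n,1)$, so the two halves are exactly the two estimates already recorded in Proposition \ref{prop:boundpnormmetric}, specialised to the sequence $\{g_n\}$ or stream $u$ at hand. I would state the metric formula once at the start and then run each pair in parallel.

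For parts (1) and (2) I would use that a single summand dominates nothing more than a piece of $d$. In part (1), whenever $j,k\geq\tilde N(n,\nu)=N(n+\nu)$ the $N$-convergence hypothesis gives $d(g_j,g_k)<2^{-(n+\nu)}$; since $2^{-n}\min(\|g_j-g_k\|_n,1)\leq d(g_j,g_k)<2^{-(n+\nu)}$ we get $\min(\|g_j-g_k\|_n,1)<2^{-\nu}\leq1$, which forces the minimum to be attained by the first argument, i.e. $\|g_j-g_k\|_n<2^{-\nu}$. Part (2) is the same computation with $s,t,\tau$ in place of $j,k,\nu$. The side conditions then need a line each: $\tilde N(n,\nu)=N(n+\nu)$ is computable and nondecreasing in $\nu$ when $N$ is, and $\tilde T(n)=T(n+\cdot)$ is a time translation of $T$, hence GPAC-generable (re-initialise the defining polynomial ODE at time $n$) and still nonnegative and nondecreasing.

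For parts (3) and (4) I would split the series defining $d$ into a head and a geometric tail. In part (3), fix $\nu$ and take $m,m'\geq N(\nu)=\max_{n\leq\nu+1}\tilde N(n,\nu+1)$; the $\tilde N$-FC hypothesis then yields $\|g_m-g_{m'}\|_n<2^{-(\nu+1)}$ for every $n\leq\nu+1$. Bounding the head by $\sum_{n=1}^{\nu+1}2^{-n}\cdot2^{-(\nu+1)}<2^{-(\nu+1)}$ (using $\sum_{n\geq1}2^{-n}=1$) and the tail by $\sum_{n\geq\nu+2}2^{-n}=2^{-(\nu+1)}$ gives $d(g_m,g_{m'})<2^{-\nu}$. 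Part (4) is identical, but because $\tau$ is real the head runs over the integers $n\leq\tau+2$ and the tail starts at $\lfloor\tau+2\rfloor+1$, so the tail equals $2^{-\lfloor\tau+2\rfloor}\leq2^{-(\tau+1)}$; the extra ``$+1$'' in the cutoff $\tau+2$ (versus the discrete $\nu+1$) is precisely what absorbs the flooring of the real parameter, which is worth flagging explicitly since it explains the asymmetry between the two stated formulas. I would also note that $N$ in part (3) is nondecreasing and computable, being a finite maximum of computable values.

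The one genuinely delicate point, which I expect to be the main obstacle, is the regularity of the $T$ produced in part (4): $T(\tau)=\max_{n\leq\tau+2}\tilde T(n,\tau+1)$ is a maximum over a $\tau$-dependent finite index set, so new $C^1$ components enter the maximum as $\lfloor\tau+2\rfloor$ increments at integer times and can create upward jumps, meaning the stated $T$ need not be $C^1$ (or even continuous) as Definition \ref{def:modconv} demands of a continuous modulus of convergence. The cleanest fix is to observe that only the defining inequality is needed, and to read $T$ as any nondecreasing $C^1$ majorant of this maximum (which exists, and can be taken GPAC-generable whenever the finitely many relevant $\tilde T(n)$ are); monotonicity of the majorant transfers immediately, and the inequality is preserved because enlarging $T$ only shrinks the set of admissible $s,t$. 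By contrast, the series manipulations in all four parts are routine once the head/tail split and the weight identity $\sum_{n\geq1}2^{-n}=1$ are in place.
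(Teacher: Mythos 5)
Your proposal is correct and follows essentially the same route as the paper: the core of all four parts is exactly the two estimates of Proposition \ref{prop:boundpnormmetric} (which you inline via the single-summand bound and the head/tail split rather than cite), together with the same routine monotonicity and computability checks on the transformed moduli. Your flagged concern in part 4 is genuine --- the paper's own proof only verifies that $T(\tau)=\max_{n\leq\tau+2}\tilde{T}(n,\tau+1)$ is nonnegative and nondecreasing, silently ignoring that this maximum can jump upward at integer values of $\tau+2$ as new indices enter and so need not be continuous, let alone $C^1$ as Definition \ref{def:modconv} requires --- and your fix (replace $T$ by a nondecreasing $C^1$ majorant, noting that enlarging the modulus only shrinks the set of admissible $s,t$ and so preserves $T$-convergence) is a valid repair of that gap in the published argument.
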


\begin{proof}
To prove claim 1, we first observe that for each $n$, the function $\tilde{N}(n,\cdot):\nu\mapsto N(n+\nu)$ is nonnegative and nondecreasing (since $N$ is nonnegative and nondecreasing), so that $\tilde{N}$ is a discrete modulus of convergence for pseudonorms. It is also clear from inspection that if $N$ is computable, so is $\tilde{N}$.

Next, we take $\nu\in\nbb$, $n\in\nbb$ and $j,k\in\nbb$ with $j,k\geq\tilde{N}(n,\nu)$. By construction of $\tilde{N}$ this means that $j,k\geq N(n+\nu)$ and thus, since $g$ is an $N$-convergent Cauchy sequence, it follows that $d(g_j,g_k)<2^{-n-\nu}$. By applying Proposition \ref{prop:boundpnormmetric} we conclude that $\|g_j-g_k\|_n<2^{-\nu}$, so that $g$ is an $\tilde{N}$-FC sequence.

To prove claim 2, we first observe that for each $n$, the function $t\mapsto T(n+t)$ is nonnegative and nondecreasing (since $T$ is a nonnegative and nondecreasing), so that $\tilde{T}$ is a continuous modulus of convergence for pseudonorms. Moreover, each $t\mapsto T(n+t)$ is computable since it is the composition of $T$ with the function $t\mapsto t+n$, which can be obtained using one constant and one adder module. As a side remark, the procedure that maps $n$ into a GPAC $\gcal_n$ generating the corresponding $\tilde{T}(n)$ is also computable on $n$.

The remainder of the claim can be proved, \textit{mutatis mutandis}, as in claim 1.

To prove claim 3, we first see that the function $\ds\nu\mapsto\max_{n\leq\nu+1}\tilde{N}(n,\nu+1)$ is nonnegative and nondecreasing, since $\tilde{N}(n,\cdot)$ is nonnegative and nondecreasing for each $n$, so that $N$ is a discrete modulus of convergence. It is also clear that if $\tilde{N}$ is computable, so is $N$, since taking maxima is a computable operation in $\nbb$.

Next, we take $\nu\in\nbb$ and $j,k\in\nbb$ with $j,k\geq N(\nu)$. By construction of $N$ this means that $j,k\geq\tilde{N}(n,\nu+1)$ for all $n\leq\nu+1$ and thus, since $g$ is an $\tilde{N}$-FC sequence, it follows that $\|g_j-g_k\|_n<2^{-\nu-1}$ for all $n\leq\nu+1$. By applying Proposition \ref{prop:boundpnormmetric} we conclude that $d(g_j,g_k)<2^{-\nu}$, so that $g$ is an $N$-convergent Cauchy sequence.

To prove claim 4, we first see that the function $t\mapsto\max_{n\leq\tau+2}\tilde{T}(n,\tau+1)$ is nonnegative and nondecreasing, so that $T$ is a continuous module of convergence. The remainder of the claim can be proved, \textit{mutatis mutandis}, as in claim 3.
\end{proof}

\section{Computability of the Gamma function}\label{sec:gammafun}

Our motivation for considering limit operators is the computability of the gamma function,
\[\Gamma(x)=\int_0^\infty t^{x-1}e^{-t}dt,\]
which is not differentially algebraic (and thus, not Shannon GPAC-generable)\footnote{Proved in \cite{holder:86}, mentioned in \cite{shannon:41}.}. There are known differential equations in two variables related to the gamma function; for example (see \cite[p. 174]{olveretal:10}), if we define the incomplete gamma functions
\begin{equation}\label{eq:igamma}\gamma_{i1}(t,x)=\int_0^t s^{x-1}e^{-s}ds;\end{equation}
\begin{equation}\label{eq:Igamma}\gamma_{i2}(t,x)=\int_t^\infty s^{x-1}e^{-s}ds,\end{equation}
then both incomplete gamma functions satisfy the differential equation (for $w=w(t,x)$)
\begin{equation}\label{eq:gammad1}\frac{d^2w}{dt^2}+\left(1+\frac{1-x}{t}\right)\frac{dw}{dt}=0;\end{equation}
we shall now try to implement such relations on our analog networks.

\begin{figure}[ht]\centering
\includegraphics[width=0.6\textwidth]{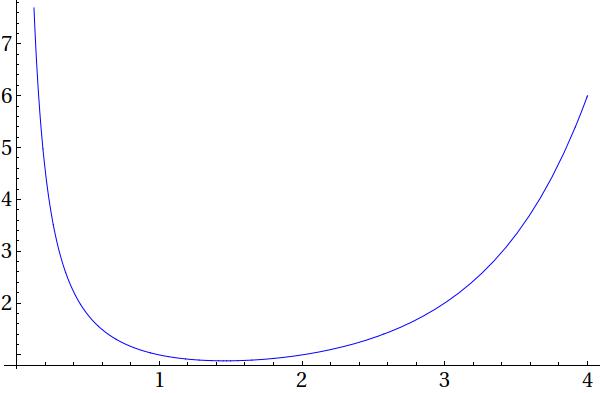}
\caption{Plot of the gamma function.\label{fig:gammafun}}
\end{figure}

Observe that GPAC models include a constant\footnote{That is, not dependent on the time variable $t$.} module for any function in $\xcal$, and in particular we could include a constant module for the gamma function itself! Of course, this is not an interesting way to obtain the gamma function.

The main idea is instead to obtain the gamma function as the limit of a function in two variables,
\[\Gamma(x)=\lim_{t\rightarrow\infty}\gamma(t,x),\]
for some function $\gamma\in C^1(\tbb,\xcal)$ which will be specified shortly. As remarked in Section \ref{sec:infspeedslow}, the choice of limit $t\rightarrow \infty$ is arbitrary, as we can take infinite speedups and consider, e.g., a limit $t\rightarrow 1^-$. Since $\Gamma(x)$ has a pole at $x=0$, we need to consider a space where functions are defined in a region ``away from'' $x=0$. For simplicity, we shall take $\xcal=C([1,+\infty),\rbb)$. Note that this is a Fréchet space with pseudonorms $\ds\|g\|_n=\sup_{1\leq x\leq n}|g(x)|$. We also observe that \eqref{eq:gammad1} is undetermined at $t=0$, and it would allow initial conditions $w|_{t=0}=\frac{dw}{dt}|_{t=0}=0$, for which $w\equiv 0$ is a different solution. Since well-posedness is desired, we must avoid starting at $t=0$; therefore, we consider integrals starting at $t=1$, writing
\[\Gamma(x)=\int_0^\infty t^{x-1}e^{-t}dt=\int_0^1 t^{x-1}e^{-t}dt+\int_1^\infty t^{x-1}e^{-t}dt.\]

The next step is to apply a change of variables in order to obtain integrals of the form $\int_0^\infty$; to be precise, we apply $t\mapsto s=\frac{1-t}{t}$ on the first integral and $t\mapsto s=t-1$ on the second integral, obtaining
\[\int_0^1 t^{x-1}e^{-t}dt=\int_0^\infty \left(\frac{1}{1+s}\right)^{x+1}e^{-1/(1+s)}ds=\lim_{t\rightarrow+\infty}\gamma_1(t,x);\]
\[\int_1^\infty t^{x-1}e^{-t}dt=\int_0^\infty (1+s)^{x-1}e^{-(1+s)}ds=\lim_{t\rightarrow+\infty}\gamma_2(t,x),\]
where
\[\gamma_1(t,x)=\int_0^t (1+s)^{-(x+1)}e^{-1/(1+s)}ds;\]
\[\gamma_2(t,x)=\int_0^t (1+s)^{x-1}e^{-(1+s)}ds.\]

We proceed to show that $\gamma_1$, $\gamma_2$ are L-GPAC-generable.

\textbf{Computation of $\gamma_1$}: by taking derivatives in time, we see that
\begin{align}
\frac{d\gamma_1}{dt}    &=      (1+t)^{-(x+1)}e^{-1/(1+t)};\nonumber \\
\frac{d^2\gamma_1}{dt^2}&=-(x+1)(1+t)^{-(x+2)}e^{-1/(1+t)}+(1+t)^{-(x+3)}e^{-1/(1+t)}=-\frac{x+xt+t}{(1+t)^2}\frac{d\gamma_1}{dt}\label{eq:gammad2};
\end{align}
moreover, we have initial conditions
\[\gamma_1(0,x)=0,\quad\frac{d\gamma_1}{dt}(0,x)=1/e.\]

We can look at the PDE \eqref{eq:gammad2} as an ODE in $t$ with a parameter $x$. It is then easy to check that it defines a well-posed problem since the multiplying factor $u_1(t,x)=-\frac{x+xt+t}{(1+t)^2}$ is defined for all $t\in\tbb$. As an intermediate step in generating $\gamma_1$ with an L-GPAC, via \eqref{eq:gammad2}, we generate the multiplying factor $u_1$, and to achieve this we consider the function $s_{\downarrow}(t)=\frac{1}{1+t}$, which is GPAC-generable by the proof of Proposition \ref{prop:infinitespeedslow}. We can thus construct $u_1=-(x+xt+t)s_{\downarrow}^2$ and obtain $\gamma_1$ with an L-GPAC as in Figure \ref{fig:gamma1}, which implements \eqref{eq:gammad2}.

\begin{figure}[ht]\centering
\raisebox{-.5\height}{\begin{tikzpicture}
\draw ( 0 ,-1  ) -- ( .5,-1  ) -- ( .5,-1.5) -- ( 0 ,-1.5) -- ( 0 ,-1  );
\draw ( 1 ,- .5) -- (1.5,- .5) -- (1.5,-1  ) -- ( 1 ,-1  ) -- ( 1 ,- .5);
\draw ( 2 ,  .5) -- (2.5,  .5) -- (2.5, 0  ) -- ( 2 , 0  ) -- ( 2 ,  .5);
\draw ( 2 ,-1  ) -- (2.5,-1  ) -- (2.5,-1.5) -- ( 2 ,-1.5) -- ( 2 ,-1  );
\draw ( 3 ,- .5) -- (3.5,- .5) -- (3.5,-1  ) -- ( 3 ,-1  ) -- ( 3 ,- .5);
\draw ( 3 ,-1.5) -- (3.5,-1.5) -- (3.5,-2  ) -- ( 3 ,-2  ) -- ( 3 ,-1.5);
\draw ( 4 ,  .5) -- (4.5,  .5) -- (4.5, 0  ) -- ( 4 , 0  ) -- ( 4 ,  .5);
\draw ( 4 ,-1  ) -- (4.5,-1  ) -- (4.5,-1.5) -- ( 4 ,-1.5) -- ( 4 ,-1  );
\draw ( 5 ,- .5) -- (5.5,- .5) -- (5.5,-1  ) -- ( 5 ,-1  ) -- ( 5 ,- .5);

\draw[->] ( 0 ,  .25) -- ( 2 ,  .25 );
\draw[->] ( .5,  .25) -- ( 1 ,- .625);
\draw[->] ( .5,  .25) -- ( 3 ,- .625);
\draw[->] ( .5,-1.25) -- ( 1 ,- .875);
\draw[->] ( .5,-1.25) -- ( 2 ,-1.25 );
\draw[->] (1.5,- .75) -- ( 2 ,-1.125);
\draw[->] (2.5,-1.25) -- ( 3 ,- .875);
\draw[->] (2.5,  .25) -- (3.5,  .25 ) -- ( 4 ,  .375);
\draw[->] (3.5,  .25) -- ( 4 ,  .125);
\draw[->] (3.5,- .75) -- ( 4 ,-1.125);
\draw[->] (3.5,-1.75) -- ( 4 ,-1.375);
\draw[->] (4.5,  .25) -- ( 5 ,- .625);
\draw[->] (4.5,-1.25) -- ( 5 ,- .875);
\draw[->] (5.5,- .75) -- ( 6 ,- .75 );

\node at ( .25 , .5 ) {$t$};
\node at ( .25,-1.25) {$x$};
\node at (1.25,- .75) {$\times$};
\node at (2.25,  .25) {$s_{\downarrow}$};
\node at (2.25,-1.25) {$+$};
\node at (3.25,- .75) {$+$};
\node at (3.25,-1.75) {$-1$};
\node at (4.25,  .25) {$\times$};
\node at (4.25,-1.25) {$\times$};
\node at (5.25,- .75) {$\times$};
\node at (5.75,- .5 ) {$u_1$};
\end{tikzpicture}}
\hspace{.5cm}
\raisebox{-.5\height}{\begin{tikzpicture}
\draw (-1.5, 0 ) -- (-1 ,0 ) -- (-1 ,-.5) -- (-1.5,-.5) -- (-1.5,0 );
\draw (- .5, .5) -- ( .5,.5) -- ( .5,-.5) -- (- .5,-.5) -- (- .5,.5);
\draw ( 1.5, .5) -- (2.5,.5) -- (2.5,-.5) -- ( 1.5,-.5) -- ( 1.5,.5);
\draw ( 3.5, .5) -- (4.5,.5) -- (4.5,-.5) -- ( 3.5,-.5) -- ( 3.5,.5);

\draw[->] (-2.5,-1  ) -- ( 3 ,-1  ) -- ( 3  ,-.25) -- (3.5,-.25);
\draw[->] (-2  ,-1  ) -- (-2 ,-.25) -- (-1.5,-.25);
\draw[->] ( 1  ,-1  ) -- ( 1 ,-.25) -- ( 1.5,-.25);
\draw[->] (-1  ,-.25) -- (-.5,-.25);
\draw[->] (  .5, .25) -- (1.5, .25);
\draw[->] ( 2.5, .25) -- (3.5, .25);
\draw[->] ( 3  , .25) -- ( 3 , 1  ) -- (-1  , 1  ) -- (-1 , .25) -- (-.5,.25);
\draw[->] ( 4.5, 0  ) -- ( 5 , 0  );

\node at (-1.25,-.25) {$u_1$};
\node at ( 0   , 0  ) {$\times$};
\node at ( 2   , 0  ) {$\varint$};
\node at ( 4   , 0  ) {$\varint$};
\node at (-2.25,-.75) {$t$};
\node at ( 1   , .5 ) {$\gamma_1''$};
\node at ( 2.75, .5 ) {$\gamma_1'$};
\node at ( 4.75, .25) {$\gamma_1$};
\end{tikzpicture}}
\caption[Construction of auxiliary functions $u_1(t)$ and $\gamma_1(t,x)$.]{Construction of $u_1(t)=-\frac{x+xt+t}{(1+t)^2}$ and $\gamma_1(t,x)$.\label{fig:gamma1}}
\end{figure}
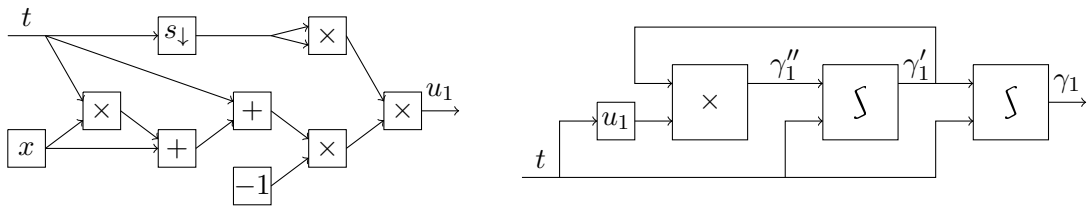

\textbf{Computation of $\gamma_2$}: by taking derivatives in time, we see that
\begin{align}
\frac{d\gamma_2}{dt}    &=     (1+t)^{x-1}e^{-(1+t)};\nonumber \\
\frac{d^2\gamma_2}{dt^2}&=(x-1)(1+t)^{x-2}e^{-(1+t)}-(1+t)^{x-1}e^{-(1+t)}=\frac{x-t-2}{1+t}\frac{d\gamma_2}{dt}\label{eq:gammad3};
\end{align}
moreover, we have initial conditions
\[\gamma_2(0,x)=0,\quad\frac{d\gamma_2}{dt}(0,x)=1/e.\]

As with $\gamma_1$, we can look at the PDE \eqref{eq:gammad3} as an ODE in $t$ with a parameter $x$. It is then easy to check that it defines a well-posed problem, since the multiplying factor $u_2(t,x)=\frac{x-t-2}{1+t}$ is defined for all $t\in\tbb$. As an intermediate step in generating $\gamma_2$ with an L-GPAC, via \eqref{eq:gammad3}, we generate the multiplying factor $u_2$, and to achieve this we again consider the function $s_{\downarrow}(t)=\frac{1}{1+t}$ from the proof of Proposition \ref{prop:infinitespeedslow}. We can thus construct $u_2=(x-t-2)s_{\downarrow}$ and obtain $\gamma_2$ with an L-GPAC as in Figure \ref{fig:gamma2}, which implements \eqref{eq:gammad3}.

\begin{figure}[ht]\centering
\raisebox{-.5\height}{\begin{tikzpicture}
\draw ( 0 ,-1  ) -- ( .5,-1  ) -- ( .5,-1.5) -- ( 0 ,-1.5) -- ( 0 ,-1  );
\draw ( 0 ,-2  ) -- ( .5,-2  ) -- ( .5,-2.5) -- ( 0 ,-2.5) -- ( 0 ,-2  );
\draw ( 0 ,-3  ) -- ( .5,-3  ) -- ( .5,-3.5) -- ( 0 ,-3.5) -- ( 0 ,-3  );
\draw ( 1 ,- .5) -- (1.5,- .5) -- (1.5,-1  ) -- ( 1 ,-1  ) -- ( 1 ,- .5);
\draw ( 1 ,-2.5) -- (1.5,-2.5) -- (1.5,-3  ) -- ( 1 ,-3  ) -- ( 1 ,-2.5);
\draw ( 2 , 0  ) -- (2.5, 0  ) -- (2.5,- .5) -- ( 2 ,- .5) -- ( 2 , 0  );
\draw ( 2 ,-1.5) -- (2.5,-1.5) -- (2.5,-2  ) -- ( 2 ,-2  ) -- ( 2 ,-1.5);
\draw ( 3,-.75) -- (3.5,-.75) -- (3.5,-1.25) -- ( 3,-1.25) -- ( 3 ,-.75);

\draw[->] ( 0 ,- .25) -- ( 2 ,-  .25);
\draw[->] ( .5,- .25) -- ( 1 ,- .625);
\draw[->] ( .5,-1.25) -- ( 1 ,- .875);
\draw[->] ( .5,-2.25) -- ( 1 ,-2.625);
\draw[->] ( .5,-3.25) -- ( 1 ,-2.875);
\draw[->] (1.5,- .75) -- ( 2 ,-1.625);
\draw[->] (1.5,-2.75) -- ( 2 ,-1.875);
\draw[->] (2.5,- .25) -- ( 3 ,- .875);
\draw[->] (2.5,-1.75) -- ( 3 ,-1.125);
\draw[->] (3.5,-1   ) -- ( 4 ,-1    );

\node at ( .25, 0   ) {$t$};
\node at ( .25,-1.25) {$-1$};
\node at ( .25,-2.25) {$-2$};
\node at ( .25,-3.25) {$x$};
\node at (1.25,- .75) {$\times$};
\node at (1.25,-2.75) {$+$};
\node at (2.25,- .25) {$s_{\downarrow}$};
\node at (2.25,-1.75) {$+$};
\node at (3.25,-1   ) {$\times$};
\node at (3.75,- .75) {$u_2$};
\end{tikzpicture}}
\hspace{.5cm}
\raisebox{-.5\height}{\begin{tikzpicture}
\draw (-1.5, 0 ) -- (-1 ,0 ) -- (-1 ,-.5) -- (-1.5,-.5) -- (-1.5,0 );
\draw (- .5, .5) -- ( .5,.5) -- ( .5,-.5) -- (- .5,-.5) -- (- .5,.5);
\draw ( 1.5, .5) -- (2.5,.5) -- (2.5,-.5) -- ( 1.5,-.5) -- ( 1.5,.5);
\draw ( 3.5, .5) -- (4.5,.5) -- (4.5,-.5) -- ( 3.5,-.5) -- ( 3.5,.5);

\draw[->] (-2.5,-1  ) -- ( 3 ,-1  ) -- ( 3  ,-.25) -- (3.5,-.25);
\draw[->] (-2  ,-1  ) -- (-2 ,-.25) -- (-1.5,-.25);
\draw[->] ( 1  ,-1  ) -- ( 1 ,-.25) -- ( 1.5,-.25);
\draw[->] (-1  ,-.25) -- (-.5,-.25);
\draw[->] (  .5, .25) -- (1.5, .25);
\draw[->] ( 2.5, .25) -- (3.5, .25);
\draw[->] ( 3  , .25) -- ( 3 , 1  ) -- (-1  , 1  ) -- (-1 , .25) -- (-.5,.25);
\draw[->] ( 4.5, 0  ) -- ( 5 , 0  );

\node at (-1.25,-.25) {$u_2$};
\node at ( 0   , 0  ) {$\times$};
\node at ( 2   , 0  ) {$\varint$};
\node at ( 4   , 0  ) {$\varint$};
\node at (-2.25,-.75) {$t$};
\node at ( 1   , .5 ) {$\gamma_2''$};
\node at ( 2.75, .5 ) {$\gamma_2'$};
\node at ( 4.75, .25) {$\gamma_2$};
\end{tikzpicture}}
\caption[Construction of auxiliary functions $u_2(t)$ and $\gamma_2(t,x)$.]{Construction of $u_2(t)=\frac{x-t-2}{1+t}$ and $\gamma_2(t,x)$.\label{fig:gamma2}}
\end{figure}
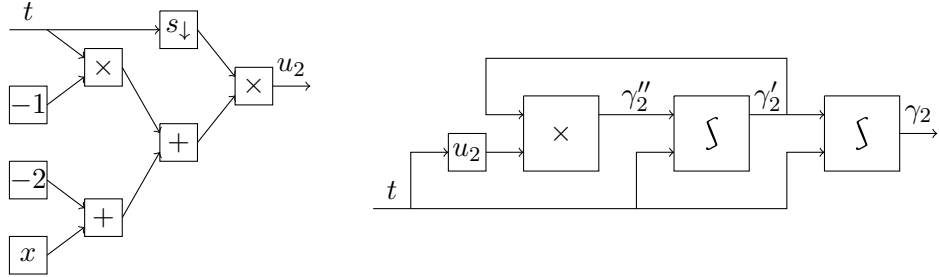

\textbf{Construction of $\Gamma$}: We finally obtain $\Gamma(x)$ as the limit
\[\Gamma(x)=\lim_{t\rightarrow\infty}\gamma_1(t,x)+\gamma_2(t,x),\]

which can be obtained using a continuous limit module. However, we must still find an effective modulus of convergence for our approximation. This will be done with two technical lemmas.

\begin{lem}\label{lem:gamma1conv}
Let $T\in\tbb$. For any $x\in[1,+\infty)$ and any $t_1,t_2\geq T$ one has
\[|\gamma_1(t_1,x)-\gamma_1(t_2,x)|\leq 1/T.\]
\end{lem}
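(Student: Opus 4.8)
The plan is to write the difference as a single integral over the interval between $t_1$ and $t_2$, and then to bound the integrand by an expression independent of $x$. Assume without loss of generality that $t_1\geq t_2$. Since $\gamma_1(t,x)=\int_0^t(1+s)^{-(x+1)}e^{-1/(1+s)}\,ds$, the difference telescopes to
\[\gamma_1(t_1,x)-\gamma_1(t_2,x)=\int_{t_2}^{t_1}(1+s)^{-(x+1)}e^{-1/(1+s)}\,ds,\]
and the integrand is nonnegative, so the absolute value of the left-hand side equals this integral.

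The key observation is that the integrand admits an upper bound that is \emph{uniform} in $x$. First, because $x\geq 1$ we have $x+1\geq 2$, and since $1+s\geq 1$ for $s\geq 0$ this gives $(1+s)^{-(x+1)}\leq (1+s)^{-2}$. Second, the exponential factor satisfies $e^{-1/(1+s)}\leq 1$. Combining these, the integrand is bounded above by $(1+s)^{-2}$ for every $x\in[1,+\infty)$.

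It then remains to integrate the explicit majorant. Using the antiderivative $-1/(1+s)$ we obtain
\[\int_{t_2}^{t_1}(1+s)^{-2}\,ds=\frac{1}{1+t_2}-\frac{1}{1+t_1}\leq\frac{1}{1+t_2}\leq\frac{1}{1+T}\leq\frac{1}{T},\]
where the successive inequalities use that the subtracted term is nonnegative, that $t_2\geq T$, and that $1+T\geq T$. This yields the claimed estimate $|\gamma_1(t_1,x)-\gamma_1(t_2,x)|\leq 1/T$.

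I do not anticipate any genuine obstacle here; the argument is a direct tail estimate. The only points requiring minor care are the uniformity of the bound in $x$ (which is precisely why the restriction $x\geq 1$ matters, forcing the decay exponent to be at least $2$ and making the majorant integrable), and the degenerate case $T=0$, for which the right-hand side $1/T$ is $+\infty$ and the inequality holds vacuously.
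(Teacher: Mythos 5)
Your proof is correct and follows essentially the same route as the paper's: write the difference as $\int_{t_2}^{t_1}(1+s)^{-(x+1)}e^{-1/(1+s)}\,ds$, bound the integrand uniformly in $x$ by $(1+s)^{-2}$ using $x\geq 1$ and $e^{-1/(1+s)}\leq 1$, and integrate to get $\frac{1}{1+T}\leq\frac{1}{T}$. The only cosmetic difference is that the paper majorizes by the tail integral $\int_T^\infty$ before bounding the integrand, while you integrate over $[t_2,t_1]$ directly and then use $t_2\geq T$; your explicit remarks on uniformity in $x$ and the vacuous case $T=0$ are sound.
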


\begin{proof}
Under the assumptions of the lemma, we have
\begin{align*}
|\gamma_1(t_1,x)-\gamma_1(t_2,x)|&=\left|\int_{t_2}^{t_1}(1+s)^{-(x+1)}e^{-1/(1+s)}ds\right|<\int_T^\infty(1+s)^{-(x+1)}e^{-1/(1+s)}ds\\
&<\int_T^\infty(1+s)^{-2}ds=\frac{1}{1+T}<\frac{1}{T}.\tag*{\qedhere}
\end{align*}
\end{proof}

\begin{lem}\label{lem:gamma2conv}
Let $T\in\tbb$, and $k\in\nbb$. For any $x\in[1,k+1]$ and any $t_1,t_2\geq T$ one has
\[|\gamma_2(t_1,x)-\gamma_2(t_2,x)|\leq (k+1)!(T+1)^ke^{-(T+1)}.\]
\end{lem}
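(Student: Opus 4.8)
The plan is to proceed exactly as in the proof of Lemma \ref{lem:gamma1conv}: write the difference $|\gamma_2(t_1,x)-\gamma_2(t_2,x)|$ as an integral over the interval between $t_1$ and $t_2$, bound it by the tail integral from $T$ to $\infty$, and then estimate that tail. First I would observe that, by the definition of $\gamma_2$ and the fundamental theorem of calculus,
\[
|\gamma_2(t_1,x)-\gamma_2(t_2,x)|=\left|\int_{t_2}^{t_1}(1+s)^{x-1}e^{-(1+s)}ds\right|\leq\int_T^\infty(1+s)^{x-1}e^{-(1+s)}ds,
\]
using that $t_1,t_2\geq T$ and that the integrand is nonnegative. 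The essential difference from the $\gamma_1$ case is that here the integrand is \emph{increasing} in $x$, so I must use the hypothesis $x\leq k+1$ to bound $(1+s)^{x-1}\leq(1+s)^k$ for all $s\geq T\geq 0$, reducing the problem to estimating $\int_T^\infty(1+s)^k e^{-(1+s)}ds$.

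The main work is thus to bound the tail integral $\int_T^\infty(1+s)^k e^{-(1+s)}ds$ by $(k+1)!(T+1)^k e^{-(T+1)}$. I would substitute $r=1+s$ to rewrite it as $\int_{T+1}^\infty r^k e^{-r}dr$, which is an upper incomplete gamma integral $\Gamma(k+1,T+1)$. The cleanest route is to show that the integrand $r^k e^{-r}$ is dominated on $[T+1,\infty)$ by a single exponential: for $r\geq T+1$ one can write $r^k e^{-r}\leq (T+1)^k e^{-(T+1)}\cdot\big(r/(T+1)\big)^k e^{-(r-(T+1))}$, and then bound the remaining factor. More directly, I expect the bound to follow from the elementary estimate
\[
\int_{T+1}^\infty r^k e^{-r}dr\leq (k+1)!\,(T+1)^k e^{-(T+1)},
\]
which can be proved either by repeated integration by parts (producing the finite sum $k!\,e^{-(T+1)}\sum_{j=0}^k (T+1)^j/j!$ and then crudely bounding each term by $(T+1)^k$ and the sum of the $1/j!$ factors by a constant absorbed into the $(k+1)!$), or by splitting off one factor of $(T+1)^k$ and bounding the tail of a lower-order integral.

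The step I expect to be the main obstacle is getting the constant exactly right so that the clean form $(k+1)!(T+1)^k e^{-(T+1)}$ comes out, rather than a messier sum. Integration by parts naturally yields $\int_{T+1}^\infty r^k e^{-r}dr=e^{-(T+1)}\sum_{j=0}^{k}\frac{k!}{j!}(T+1)^j$, and the task is to over-estimate this sum by $(k+1)!(T+1)^k e^{-(T+1)}$. Since $T+1\geq 1$, each power $(T+1)^j$ with $j\leq k$ satisfies $(T+1)^j\leq (T+1)^k$, so the sum is at most $(T+1)^k\sum_{j=0}^k k!/j!\leq (T+1)^k\, k!\, e\leq (k+1)!(T+1)^k$, where the last inequality uses $k!\,e\leq (k+1)!$ for $k\geq 2$ (and the small cases $k=0,1$ are checked directly). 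This furnishes the claimed bound, and the remainder of the argument is routine.
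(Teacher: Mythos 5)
Your proposal is correct and follows essentially the same route as the paper: reduce to the tail integral $\int_T^\infty(1+s)^{x-1}e^{-(1+s)}ds$, use $x\leq k+1$ and the substitution $r=1+s$ to get $\int_{T+1}^\infty r^k e^{-r}dr$, and bound this by $(k+1)!(T+1)^k e^{-(T+1)}$ via integration by parts. The paper merely states that last estimate follows ``by induction on $k$ (using integration by parts)'', whereas you unroll the same argument into the explicit incomplete-gamma sum and bound it directly; this is a presentational difference, not a different proof.
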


\begin{proof}
Under the assumptions of the lemma, we have
\begin{align*}
|\gamma_2(t_1,x)-\gamma_2(t_2,x)|&=\left|\int_{t_2}^{t_1}(1+s)^{x-1}e^{-(1+s)}ds\right|<\int_T^\infty(1+s)^{x-1}e^{-(1+s)}ds\\
&\leq\int_T^\infty(1+s)^ke^{-(1+s)}ds=\int_{T+1}^\infty s^ke^{-s}ds\leq(k+1)!(T+1)^ke^{-(T+1)},
\end{align*}
where the last inequality can be proved by induction on $k\in\nbb$ (using integration by parts).
\end{proof}

Therefore, the limits in $\gamma_1$, $\gamma_2$ become effective for suitable moduli of convergence. We can merge these two results and prove effectiveness of our construction, as in the next result. We continue to use the metric given by \eqref{eq:dfromp2}. Recall that $\xcal=C([1,\infty),\rbb)$ is a Fréchet space with pseudonorms $\ds\|g\|_n=\sup_{1\leq x\leq n}|g(x)|$.

\begin{lem}\label{lem:compgammafun}
Let $\gamma=\gamma_1+\gamma_2$, where $\gamma_1$, $\gamma_2$ are defined as in \eqref{eq:gammad2}, \eqref{eq:gammad3}. Then $\ds\lim_{t\rightarrow\infty}\gamma(t)=\Gamma$; moreover, $\gamma$ is a $T$-convergent Cauchy stream for $T(\tau)=C 2^\tau$ with a suitably large constant $C$.
\end{lem}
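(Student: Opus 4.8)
The goal is to show that $\gamma = \gamma_1 + \gamma_2$ converges to $\Gamma$ as $t \to \infty$, with an explicit continuous modulus of convergence $T(\tau) = C 2^\tau$ witnessing that $\gamma$ is a $T$-convergent Cauchy stream. The limit identity $\lim_{t\to\infty}\gamma(t) = \Gamma$ is essentially established already by the change-of-variables computations preceding the lemma, since $\gamma_1$ and $\gamma_2$ were constructed precisely so that their limits recover $\int_0^1 t^{x-1}e^{-t}dt$ and $\int_1^\infty t^{x-1}e^{-t}dt$, whose sum is $\Gamma(x)$. So the substance of the proof lies in the effectiveness claim, and my plan is to combine Lemmas \ref{lem:gamma1conv} and \ref{lem:gamma2conv} into a bound on the metric $d(\gamma(s),\gamma(t))$ of the form required by Definition \ref{def:efflimmetric}(2).

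First I would recall the definition of $T$-convergence: I must exhibit $T$ so that for all $\tau \in \tbb$ and all $s,t \geq T(\tau)$ one has $d(\gamma(s),\gamma(t)) < 2^{-\tau}$. Using the metric \eqref{eq:dfromp2}, this reduces to controlling the pseudonorms $\|\gamma(s)-\gamma(t)\|_n = \sup_{1\leq x\leq n}|\gamma(s,x)-\gamma(t,x)|$. By the triangle inequality applied to $\gamma = \gamma_1 + \gamma_2$, together with Lemma \ref{lem:gamma1conv} (which gives the uniform-in-$x$ bound $1/T$ for $\gamma_1$) and Lemma \ref{lem:gamma2conv} (which gives $(k+1)!(T+1)^k e^{-(T+1)}$ for $\gamma_2$ on $x\in[1,k+1]$, i.e. taking $k = n-1$ for the $n$-th pseudonorm), I can bound $\|\gamma(s)-\gamma(t)\|_n$ for $s,t \geq T$. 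The key structural point is that the $\gamma_2$ bound decays like $e^{-T}$ times a polynomial-in-$T$, times the factor $n!$ depending on the pseudonorm index.

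Next I would feed these pseudonorm bounds into Proposition \ref{prop:boundpnormmetric}(2): to make $d(\gamma(s),\gamma(t)) < 2^{-\tau}$ it suffices to make the pseudonorms $\|\gamma(s)-\gamma(t)\|_n$ small (say below $2^{-\tau-1}$) for all $n$ up to some $M$ with $2^{-M} \leq 2^{-\tau-2}$, i.e. $M \approx \tau + 2$. Thus I only need the pseudonorm bounds for $n \leq M \sim \tau$. Since Lemma \ref{lem:gamma1conv} forces $1/T < 2^{-\tau-1}$, giving $T \gtrsim 2^\tau$, and Lemma \ref{lem:gamma2conv} (with $k = M-1 \sim \tau$) requires $(M)!\,(T+1)^{M-1}e^{-(T+1)} < 2^{-\tau-1}$, I would check that a threshold of the form $T \geq C2^\tau$ dominates both requirements. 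The $\gamma_1$ requirement is linear in $2^\tau$ and clearly satisfied; the $\gamma_2$ requirement is the delicate one.

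The main obstacle is verifying that $T = C2^\tau$ beats the $\gamma_2$ bound, because that bound contains the factorial $(\tau+1)!$ and the growing power $(T+1)^\tau$, which a priori grow fast in $\tau$. The point to make carefully is that the exponential $e^{-(T+1)} = e^{-(C2^\tau+1)}$ decays doubly-exponentially in $\tau$, whereas $(\tau+1)!\,(C2^\tau+1)^\tau$ grows only like $e^{O(\tau^2)}$ (since $\log$ of it is $O(\tau\log\tau) + O(\tau^2)$); thus for $C$ chosen large enough and $\tau$ large the exponential term wins decisively and the product falls below $2^{-\tau-1}$. I would state this as an elementary asymptotic comparison: $\log\big((\tau+1)!(C2^\tau+1)^\tau\big) = O(\tau^2)$ while $C2^\tau$ grows faster than any polynomial in $\tau$, so choosing $C$ so that $C2^\tau \geq \tau^3$ (say) for all $\tau$ suffices, after which the bound is immediate. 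Finally I would note that $T(\tau) = C2^\tau$ is nonnegative, nondecreasing and GPAC-generable, so it is a legitimate continuous modulus of convergence, completing the proof.
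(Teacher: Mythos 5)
Your proposal is correct and takes essentially the same route as the paper's own proof: decompose $\gamma=\gamma_1+\gamma_2$, bound the pseudonorms via Lemmas \ref{lem:gamma1conv} and \ref{lem:gamma2conv} (with $k=n-1$), reduce to the first $\sim\tau$ pseudonorms (the paper splits the metric sum at $N=\lceil\tau\rceil+2$, which is exactly the content of the Proposition \ref{prop:boundpnormmetric}(2) you invoke), and conclude because $e^{-C2^\tau}$ dominates the factorial and power factors whose logarithm is $O(\tau^2)$ (the paper makes this explicit via Stirling's approximation and a numerical choice of $C$). The only soft spot, no worse than the paper's own informality at the same step, is your shortcut that ``$C2^\tau\geq\tau^3$ for all $\tau$ suffices'': since $\tau^3$ exceeds the implied constant times $\tau^2$ only for large $\tau$, the small-$\tau$ regime still needs $C$ chosen large uniformly, which is how the paper phrases it.
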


\begin{proof}
Only the effectiveness of the limit remains to be proven. Let $\tau\in\tbb$, $T=T(\tau)=C2^\tau$ and take $t_1,t_2\in\tbb$ with $t_1,t_2\geq T$. We can write $d(\gamma(t_1),\gamma(t_2))\leq d(\gamma_1(t_1),\gamma_1(t_2))+d(\gamma_2(t_1),\gamma_2(t_2))$ and thus we can treat $\gamma_1$ and $\gamma_2$ separately.

To deal with $\gamma_1$, we use Lemma \ref{lem:gamma1conv} to conclude that, for any $n\in\nbb^+$, we have
\[\|\gamma_1(t_1)-\gamma_1(t_2)\|_n\leq \frac{1}{T}\leq\frac{1}{C}2^{-\tau};\]
thus, we obtain the bound
\begin{align*}
d(\gamma_1(t_1),\gamma_1(t_2))&=\sum_{n=1}^\infty 2^{-n}\min\{1,\|\gamma_1(t_1)-\gamma_1(t_2)\|_n\}\\
&\leq\sum_{n=1}^\infty 2^{-n}\|\gamma_1(t_1)-\gamma_1(t_2)\|_n\leq\sum_{n=1}^\infty 2^{-n}\frac{1}{C}2^{-\tau}=\frac{1}{C}2^{-\tau},
\end{align*}
which is smaller than $2^{-\tau-1}$ for a suitably large $C$ (namely, for $C>2$).

To deal with $\gamma_2$, we use Lemma \ref{lem:gamma2conv} to conclude that, for any $n\in\nbb^+$, we have
\[\|\gamma_2(t_1)-\gamma_2(t_2)\|_n\leq n!(T+1)^{n-1}e^{-(T+1)};\]
next, we shall take $N=\lceil\tau\rceil+2$, so that $\tau+2\leq N < \tau+3$. By splitting the sum, we obtain the bound
\begin{subequations}
\begin{align}
d(\gamma_2(t_1),\gamma_2(t_2))
&=\sum_{n=1}^\infty 2^{-n}\min\{1,\|\gamma_1(t_1)-\gamma_1(t_2)\|_n\}\\
&\leq\sum_{n=1}^N 2^{-n}\|\gamma_1(t_1)-\gamma_1(t_2)\|_n+\sum_{n=N+1}^\infty 2^{-n}\\
&\leq\sum_{n=1}^N 2^{-n}n!(T+1)^{n-1}e^{-(T+1)}+2^{-N}\\
&\leq N!(T+1)^{N-1}e^{-(T+1)}\sum_{n=1}^N 2^{-n}+2^{-\tau-2}\\
&<eN^{N+1/2}e^{-N}(T+1)^{N-1}e^{-(T+1)}+2^{-\tau-2}\label{eq:boundgamma2}\\
&=\exp\{(N+1/2)\log(N)-N+(N-1)\log(T+1)-T\}+2^{-\tau-2}\\
&<\exp\{(\tau+7/2)\log(\tau+3)-\tau-2+(\tau+2)\log(C2^\tau+1)-C2^\tau\}+2^{-\tau-2},\label{eq:boundgamma3}
\end{align}
\end{subequations}
where \eqref{eq:boundgamma2} is justified by Stirling's approximation \cite[Chapter 8]{rudin:76},

\begin{equation}\label{eq:stirling}\sqrt{2\pi}k^{k+\frac{1}{2}}e^{-k}\leq k!\leq ek^{k+\frac{1}{2}}e^{-k}.\end{equation}

For the last step, we wish to bound the first term as
\[\exp\{(\tau+7/2)\log(\tau+3)-\tau-2+(\tau+2)\log(C2^\tau+1)-C2^\tau\}\leq\exp\{-\log(2)(\tau+2)\},\]
which means that we need to find $C$ such that, for all $\tau\in\tbb$,
\[(\tau+7/2)\log(\tau+3)-\tau-2+(\tau+2)\log(C2^\tau+1)-C2^\tau\leq-\log(2)(\tau+2).\]
But this certainly holds for a suitably large $C$ that does not depend on $\tau$, because the term $C2^\tau$ largely dominates all other terms (numerically, we have found that $C>2.85216$ suffices). Thus the bound in \eqref{eq:boundgamma3} can be further taken to be smaller than $\exp\{-\log(2)(\tau+2)\}+2^{-\tau-2}=2^{-\tau-1}$.

Combining the two bounds, we conclude that $d(\gamma(t_1),\gamma(t_2))<2^{-\tau}$ and therefore $\gamma$ is a $T$-convergent Cauchy stream.
\end{proof}

\begin{thm}\label{thm:compgammafun}
The gamma function is L-GPAC-generable.
\end{thm}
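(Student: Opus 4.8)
The plan is to assemble the ingredients already established — the GPAC constructions of $\gamma_1$ and $\gamma_2$, their sum, and the effective modulus from Lemma \ref{lem:compgammafun} — and then pass the result through a single continuous limit module. First I would argue that $\gamma_1$ and $\gamma_2$ are L-GPAC-generable: each is the solution of a second-order ODE in $t$ with parameter $x$, namely \eqref{eq:gammad2} and \eqref{eq:gammad3}, with the stated initial conditions. The multiplying factors $u_1(t,x)=-\frac{x+xt+t}{(1+t)^2}$ and $u_2(t,x)=\frac{x-t-2}{1+t}$ are assembled from $t$, $x$, constants and the GPAC-generable function $s_{\downarrow}(t)=\frac{1}{1+t}$ (Proposition \ref{prop:infinitespeedslow}) via adders and multipliers, and two nested integrators recover $\gamma_i$ from $\gamma_i''$ — precisely the content of Figures \ref{fig:gamma1} and \ref{fig:gamma2}. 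Since $u_1,u_2$ are continuous and defined for all $t\in\tbb$, these problems are well-posed, so the networks indeed generate $\gamma_1,\gamma_2$; an adder then produces $\gamma=\gamma_1+\gamma_2\in C^1(\tbb,\xcal)$.

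Next I would invoke Lemma \ref{lem:compgammafun}, which gives both $\lim_{t\to\infty}\gamma(t)=\Gamma$ and the fact that $\gamma$ is a $T$-convergent Cauchy stream for $T(\tau)=C2^\tau$. The one-input continuous limit module of Definition \ref{def:limitmodules} accepts only $\id$-convergent streams, so the remaining step is to convert $\gamma$ into one. Since $T(\tau)=C2^\tau=Ce^{\tau\log 2}$ is a nondecreasing, Shannon GPAC-generable element of $C^1(\tbb,\rbb)$, I would compose $\gamma$ with $T$ as in the speedup construction of Figure \ref{fig:limitcomposition}: because $T$ is nondecreasing, $s,t\geq\tau$ forces $T(s),T(t)\geq T(\tau)$, whence the $T$-convergence of $\gamma$ yields $d(\gamma(T(s)),\gamma(T(t)))<2^{-\tau}$, so $\gamma\circ T$ is $\id$-convergent with the same limit $\Gamma$. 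Feeding $\gamma\circ T$ into the one-input limit module $\lcal_c$ then outputs $\lcal_c(\gamma\circ T)=\Gamma$, exhibiting $\Gamma$ as L-GPAC-generable.

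I expect the main obstacle to be making the composition $\gamma\circ T$ precise as an operation inside the module framework rather than as a mere set-theoretic reparametrization: since every channel carries a function of the single master time, realizing $\gamma\circ T$ either appeals to the derived two-input limit module of Figure \ref{fig:limitcomposition}, or, equivalently, rewrites the defining ODEs of $\gamma_1,\gamma_2$ under the substitution $t=T(\tau)$ (introducing the factor $T'(\tau)=C\log 2\cdot 2^\tau$ by the chain rule) so that the reparametrized streams are again generated by Shannon modules. The substantive analytic work — the convergence estimates justifying $T(\tau)=C2^\tau$ — has already been discharged in Lemmas \ref{lem:gamma1conv}, \ref{lem:gamma2conv} and \ref{lem:compgammafun}, so the theorem itself reduces to this assembly together with a well-posedness check for the full network.
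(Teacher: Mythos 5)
Your proposal is correct and follows essentially the same route as the paper: generate $\gamma_1,\gamma_2$ via the networks of Figures \ref{fig:gamma1} and \ref{fig:gamma2}, sum them, invoke Lemma \ref{lem:compgammafun} for the limit and the modulus $T(\tau)=C2^\tau$, and realize the exponential speedup $\gamma\circ T$ before applying the one-input limit module $\lcal_c$ --- which is exactly the content of the paper's Figure \ref{fig:gammalimit}. Your explicit verification that $\gamma\circ T$ is $\id$-convergent, and your remark on realizing the composition inside the network (via the derived two-input module of Figure \ref{fig:limitcomposition} or a chain-rule rewriting of the ODEs), simply spell out details the paper leaves implicit.
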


\begin{proof}
By Lemma \ref{lem:compgammafun}, the gamma function $\Gamma$ can be seen as the $T$-convergent limit of some function $\gamma$. Moreover, by the preceding discussion, both $\gamma$ and $T$ can be seen to be L-GPAC-generable; in particular, $\gamma$ is the sum of two L-GPAC-generable functions. Thus we can devise an L-GPAC that generates $\Gamma$, as in Figure \ref{fig:gammalimit}.
\end{proof}
\begin{figure}[ht]\centering
\begin{tikzpicture}
\draw (-3  ,  .5 ) -- (-2  ,  .5 ) -- (-2  ,- .5 ) -- (-3  ,- .5 ) -- (-3  ,  .5 );
\draw (- .5, 1.25) -- (  .5, 1.25) -- (  .5,  .25) -- (- .5,  .25) -- (- .5, 1.25);
\draw (- .5,- .25) -- (  .5,- .25) -- (  .5,-1.25) -- (- .5,-1.25) -- (- .5,- .25);
\draw ( 1.5,  .5 ) -- ( 2.5,  .5 ) -- ( 2.5,- .5 ) -- ( 1.5,- .5 ) -- ( 1.5,  .5 );
\draw ( 3.5,  .5 ) -- ( 4.5,  .5 ) -- ( 4.5,- .5 ) -- ( 3.5,- .5 ) -- ( 3.5,  .5 );
\draw[->] (-4  , 0  ) -- (-3  , 0  );
\draw[->] (-2  , 0  ) -- (-1  , 0  ) -- (-1  , .75) -- (-.5, .75);
\draw[->] (-1  , 0  ) -- (-1  ,-.75) -- (-.5 ,-.75);
\draw[->] (  .5, .75) -- ( 1  , .75) -- ( 1  , .25) -- (1.5, .25);
\draw[->] (  .5,-.75) -- ( 1  ,-.75) -- ( 1  ,-.25) -- (1.5,-.25);
\draw[->] ( 2.5, 0  ) -- ( 3.5, 0  );
\draw[->] ( 4.5, 0  ) -- ( 5.5, 0  );

\node at (-2.5, 0  ) {$T$};
\node at ( 0  , .75) {$\gamma_1$};
\node at ( 0  ,-.75) {$\gamma_2$};
\node at ( 2  , 0  ) {$+$};
\node at ( 4  , 0  ) {$\lcal_c$};
\node at (-3.5, .25) {$t$};
\node at (-1.5, .25) {$T(t)$};
\node at ( 5  , .25) {$\Gamma$};
\end{tikzpicture}
\caption[Construction of the gamma function.]{Construction of the gamma function; $T$ indicates an exponential speedup $T(\tau)=C2^\tau$.\label{fig:gammalimit}}
\end{figure}
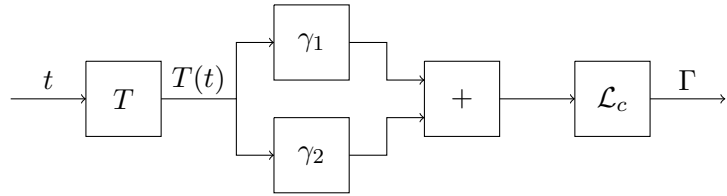

\section{Computability of the Riemann zeta function}\label{sec:compzetafun}

Our next case study concerns the computation of the Riemann zeta function, which for complex numbers with real part greater than 1 is given by

\begin{equation}\label{eq:rzeta}\zeta(z)=\sum_{n=1}^\infty\frac{1}{n^z}.\end{equation}

This function has a pole at $z=1$ and thus we should consider a space of functions defined in a region ``away from'' $z=1$. In particular, we take $\xcal=C([2,\infty),\rbb)$; in other words, we shall be interested in computing $\zeta(x)$ for real values of $x$ larger or equal than $2$. Note that $\xcal$ is a Fréchet space with pseudonorms $\ds \|g\|_n=\sup_{1\leq x\leq n}|g(x)|$.

We need a representation of the Riemann zeta function that is amenable to our framework of analog networks. Fortunately, there are known integral representations that we can use, such as

\begin{equation}\label{eq:rzetagamma}\zeta(x)=\frac{1}{\Gamma(x)}\int_0^\infty\frac{t^{x-1}}{e^t-1}dt,\end{equation}
or the Abel-Plana formula \cite{abel:65,plana:20}

\begin{equation}\label{eq:rzetaap}\zeta(x)=\frac{2^x}{x-1}-2^x\int_0^\infty\frac{\sin(x\arctan t)}{(1+t^2)^{x/2}(e^{\pi t+1})}dt.\end{equation}

The latter formula will allow us to express the zeta function as the limit of a function in two variables,
\[\zeta(x)=\lim_{t\rightarrow\infty}\zeta_1(t,x),\]
for a function $\zeta_1$ which computes the bounded integral
\begin{equation}\label{eq:zeta1}\zeta_1(t,x)=\frac{2^x}{x-1}-2^x\int_0^t\frac{\sin(x\arctan s)}{(1+s^2)^{x/2}(e^{\pi s+1})}ds.\end{equation}

For such a function, we have $\zeta_1(0,x)=\frac{2^x}{x-1}$ and $\frac{d\zeta_1}{dt}=-2^x\zeta_2$, where
\begin{equation}\label{eq:zeta2}\zeta_2(t,x)=\frac{\sin(x\arctan t)}{(1+t^2)^{x/2}(e^{\pi t+1})}.\end{equation}

\begin{lem}\label{lem:zeta2}
The function $\zeta_2$ defined in \eqref{eq:zeta2} is L-GPAC-generable.
\end{lem}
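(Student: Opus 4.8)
The plan is to realize $\zeta_2$ as a product of three streams, each generated from linear time $t$ and the constant $\xcal$-stream $x$ (the identity function $x\mapsto x\in\xcal$) by the basic Shannon modules alone. Since no limit module is needed, $\zeta_2$ will in fact be Shannon-GPAC-generable over $\xcal$, hence a fortiori L-GPAC-generable. Concretely, I would write
\[\zeta_2(t,x)=S(t,x)\cdot Q(t,x)\cdot R(t),\]
where $S(t,x)=\sin(x\arctan t)$, $Q(t,x)=(1+t^2)^{-x/2}$ and $R(t)=1/(e^{\pi t+1})$, and generate each factor separately before combining them with multipliers.

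The main obstacle is that the parameter $x$ enters $S$ and $Q$ awkwardly (inside a sine, and as a variable exponent); the key idea is that differentiating in $t$ turns both into integrator-realizable ODEs in which $x$ appears only as a multiplicative coefficient. First I would generate the auxiliary scalar stream $\frac{1}{1+t^2}$: applying the inverter of Example~\ref{ex:inverter} with $k=1$ and $b=t^2$ (where $t^2$ comes from linear time by one multiplier) produces it, and since $t^2\neq -1$ the inverter is well-posed for all $t\in\tbb$. Integrating against $t$ gives $\arctan t=\varint(0,\frac{1}{1+t^2},t)$, and multiplying by the constant $x$ yields $v=x\arctan t$ with $v'=x/(1+t^2)$.

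Next I would obtain $S$ together with the auxiliary $C(t,x)=\cos(x\arctan t)$ as the solution of the coupled linear system
\[S=\varint(0,C,v),\qquad C=\varint(1,-S,v),\]
that is, $S'=Cv'$, $C'=-Sv'$ with $S(0,x)=0$, $C(0,x)=1$; being linear in $(S,C)$, this problem is well-posed. Likewise $Q$ is the solution of the single linear equation $Q'=-Q\cdot\frac{xt}{1+t^2}$, $Q(0,x)=1$, i.e.\ $Q=\varint\!\big(1,-Q\cdot\frac{xt}{1+t^2},t\big)$, whose coefficient $\frac{xt}{1+t^2}=x\cdot t\cdot\frac{1}{1+t^2}$ is generable by multipliers. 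Finally, $R$ satisfies a first-order ODE with polynomial right-hand side: reading the factor as $e^{\pi t+1}=e\cdot e^{\pi t}$ gives $R'=-\pi R$ with $R(0)=e^{-1}$ (a Riccati equation in the alternative reading), so $R$ is generable in either case.

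Assembling these, $\zeta_2=S\cdot Q\cdot R$ is produced by two further multipliers. Each constituent network is well-posed (the inverter on the domain just noted, and the linear and polynomial ODE systems globally, since all their coefficients are defined for every $t\in\tbb$), and well-posedness is preserved under composition of networks in the sense of Definition~\ref{def:lgpacsemantics}. Hence $\zeta_2$ is L-GPAC-generable, with no limit module used.
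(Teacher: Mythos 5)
Your proof is correct and takes essentially the same route as the paper's: the same factorization $\zeta_2=\sin(x\arctan t)\cdot(1+t^2)^{-x/2}\cdot e^{-\pi t-1}$, with $\frac{1}{1+t^2}$ obtained from the inverter of Example~\ref{ex:inverter}, $\arctan t$ by integration, the factor $(1+t^2)^{-x/2}$ via the linear ODE $u'=-\frac{xt}{1+t^2}u$, and the pieces combined by multipliers. The only difference is one of explicitness: where the paper simply invokes composition with $t\mapsto\sin t$ and $t\mapsto e^t$, you write out the realizing networks (the coupled system $S'=Cv'$, $C'=-Sv'$ and the ODE $R'=-\pi R$), which is exactly how those compositions are implemented in a GPAC anyway.
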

\begin{proof}
This requires several steps, so we just provide a sketch of the construction:

\begin{enumerate}
\item the function $t\mapsto\frac{1}{1+t^2}$ is GPAC-generable; it can be given as the output of the inverter (from Example \ref{ex:inverter}) with inputs $k=1$ and $b(t)=t^2$;
\item the function $t\mapsto\arctan t$ is GPAC-generable; observe that $(\arctan t)'=\frac{1}{1+t^2}$ and use step 1;
\item the function $(t,x)\mapsto\sin(x\arctan t)$ is L-GPAC-generable; compose $(t,x)\mapsto x\arctan t$ (from step 2) with $t\mapsto\sin(t)$;
\item the function $(t,x)\mapsto(1+t^2)^{-x/2}$ is L-GPAC-generable; if $u(t,x)=(1+t^2)^{-x/2}$ then $\frac{du}{dt}=-\frac{xt}{1+t^2}u$, with $u(0,x)=1$; use step 1 (see Figure \ref{fig:zetastepfour} for more details);
\item the function $t\mapsto e^{-\pi t-1}$ is GPAC-generable; compose $t\mapsto -\pi t-1$ with $t\mapsto e^t$;
\item the function $\zeta_2$ is L-GPAC-generable; write $\zeta_2(t,x)=\sin(x\arctan t)(1+t^2)^{-x/2}e^{-\pi t-1}$ and use steps 3, 4, 5.
\end{enumerate}
\end{proof}

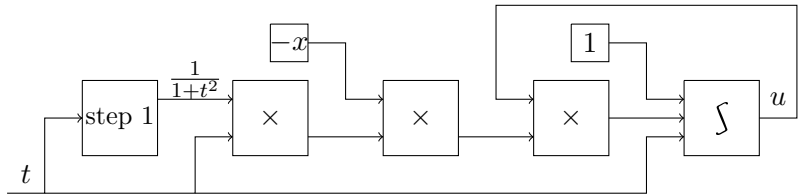
\begin{figure}[ht]\centering
\begin{tikzpicture}
\draw (0,.5) -- (1,.5) -- (1,-.5) -- (0,-.5) -- (0,.5);
\draw (2,.5) -- (3,.5) -- (3,-.5) -- (2,-.5) -- (2,.5);
\draw (4,.5) -- (5,.5) -- (5,-.5) -- (4,-.5) -- (4,.5);
\draw (6,.5) -- (7,.5) -- (7,-.5) -- (6,-.5) -- (6,.5);
\draw (8,.5) -- (9,.5) -- (9,-.5) -- (8,-.5) -- (8,.5);
\draw (2.5,1.25) -- (3,1.25) -- (3,.75) -- (2.5,.75) -- (2.5,1.25);
\draw (6.5,1.25) -- (7,1.25) -- (7,.75) -- (6.5,.75) -- (6.5,1.25);
\draw[->] (-1 ,-1  ) -- (-.5,-1  ) -- (-.5, 0  ) -- (0,0);
\draw[->] ( 1 , .25) -- (2  , .25);
\draw[->] (-.5,-1  ) -- (1.5,-1  ) -- (1.5,-.25) -- (2  ,-.25);
\draw[->] ( 3 , 1  ) -- (3.5, 1  ) -- (3.5, .25) -- (4  , .25);
\draw[->] ( 3 ,-.25) -- (4  ,-.25);
\draw[->] ( 5 ,-.25) -- (6  ,-.25);
\draw[->] ( 7 , 0  ) -- (8  , 0  );
\draw[->] ( 7 , 1  ) -- (7.5, 1  ) -- (7.5, .25) -- (8  , .25);
\draw[->] (1.5,-1  ) -- (7.5,-1  ) -- (7.5,-.25) -- (8  ,-.25);
\draw[->] ( 9 , 0  ) -- (9.5, 0  ) -- (9.5,1.5 ) -- (5.5,1.5 ) -- (5.5,.25) -- (6,.25);

\node at (  .5 ,0) {\small step 1};
\node at ( 2.5 ,0) {$\times$};
\node at ( 4.5 ,0) {$\times$};
\node at ( 6.5 ,0) {$\times$};
\node at ( 8.5 ,0) {$\varint$};
\node at ( 2.75,1) {$-x$};
\node at ( 6.75,1) {$1$};

\node at (-.75,-.75) {$t$};
\node at ( 1.5, .5 ) {$\frac{1}{1+t^2}$};
\node at (9.25, .25) {$u$};
\end{tikzpicture}
\caption[Intermediate step in the Riemann zeta function construction.]{Construction of the function $u(t,x)=(1+t^2)^{-x/2}$ appearing in step 4 of the proof of Lemma \ref{lem:zeta2}.\label{fig:zetastepfour}}
\end{figure}

\begin{thm}\label{thm:compzetafun}
The Riemann zeta function is L-GPAC-generable.
\end{thm}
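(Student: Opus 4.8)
The plan is to mirror the proof of Theorem \ref{thm:compgammafun}: first exhibit $\zeta_1$ as an L-GPAC-generable stream, then establish an effective modulus of convergence for the limit $\zeta(x)=\lim_{t\to\infty}\zeta_1(t,x)$, and finally feed $\zeta_1$ (suitably sped up) into a continuous limit module as in Figure \ref{fig:gammalimit}.

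First I would generate $\zeta_1$. By Lemma \ref{lem:zeta2}, $\zeta_2$ is L-GPAC-generable, so integrating the relation $\frac{d\zeta_1}{dt}=-2^x\zeta_2$ with initial setting $\zeta_1(0,x)=\frac{2^x}{x-1}$ produces $\zeta_1$, provided the auxiliary spatial functions $2^x$ and $\frac{1}{x-1}$ are themselves available. These are the only genuinely new ingredients compared with the gamma construction, in which all initial data were pure numerical constants. I would obtain $\frac{1}{x-1}$ as the limit of the stream $r(t,x)=\frac{1-e^{-(x-1)t}}{x-1}$, the solution of the linear problem $\frac{dr}{dt}=1-(x-1)r$, $r(0,x)=0$; since $\left|r(t,x)-\frac{1}{x-1}\right|\leq e^{-t}$ uniformly for $x\geq 2$, this limit is effective. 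Similarly I would obtain $2^x$ as the limit of $p(t,x)=e^{x\ln 2\,(1-e^{-t})}$, the solution of $\frac{dp}{dt}=x\ln 2\,e^{-t}\,p$, $p(0,x)=1$, where the convergence, though no longer uniform in $x$, is effective in the Fréchet sense (its modulus depends on the pseudonorm index $n$) and can therefore be reduced to a metric-effective, then $\id$-convergent, limit by Lemma \ref{lem:convmodpseudoequiv}. Multiplying and adding then yields $\zeta_1$.

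Second I would prove the analogue of Lemma \ref{lem:compgammafun}, namely that $\zeta_1$ is a $T$-convergent Cauchy stream for a GPAC-generable modulus $T$. The key estimate is that, since $|\sin(x\arctan s)|\leq 1$, $(1+s^2)^{x/2}\geq 1$ and $e^{\pi s+1}=e\,e^{\pi s}$, the integrand obeys $|\zeta_2(s,x)|\leq e^{-1}e^{-\pi s}$ uniformly in $x$, whence for $t_1,t_2\geq T$
\[|\zeta_1(t_1,x)-\zeta_1(t_2,x)|=2^x\left|\int_{t_2}^{t_1}\zeta_2(s,x)\,ds\right|\leq 2^x\,\frac{e^{-1}}{\pi}\,e^{-\pi T}.\]
Taking the supremum over $x\leq n$ gives $\|\zeta_1(t_1)-\zeta_1(t_2)\|_n\leq 2^n\frac{e^{-1}}{\pi}e^{-\pi T}$, and combining these pseudonorm bounds through the metric \eqref{eq:dfromp2} exactly as in Lemma \ref{lem:compgammafun} (splitting the sum at $N\approx\tau$) shows that $d(\zeta_1(t_1),\zeta_1(t_2))<2^{-\tau}$ as soon as $T$ grows at least linearly in $\tau$; thus $T(\tau)=C(\tau+1)$ suffices for a suitable constant $C$.

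Finally, since $\zeta_1$ and $T$ are both L-GPAC-generable, composing $\zeta_1$ with the speedup $T$ produces an $\id$-convergent stream whose continuous-limit-module output is $\zeta$, completing the construction. The main obstacle is the treatment of the $2^x$ prefactor: because $2^x$ grows with the spatial variable it must be generated as a (Fréchet-)effective limit rather than fed in as a constant, and it inflates the pseudonorm bound by a factor $2^n$, so one must verify that the exponential tail decay $e^{-\pi T}$ still dominates after summation over $n$. An alternative that sidesteps the inflation entirely is to factor out $2^x$, take the uniformly convergent (hence linearly-modulated) limit $I(x)=\lim_{t\to\infty}\int_0^t\zeta_2(s,x)\,ds$ first, and only afterwards form $\zeta(x)=2^x\bigl(\tfrac{1}{x-1}-I(x)\bigr)$; I would adopt whichever keeps the convergence bookkeeping cleanest.
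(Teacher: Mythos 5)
Your proposal is correct, and its mathematical core coincides with the paper's own proof: the same Abel--Plana integrand $\zeta_2$ (via Lemma \ref{lem:zeta2}), the same integration step producing $\zeta_1$, the same uniform bound $|\zeta_2(s,x)|\leq e^{-1}e^{-\pi s}$ giving $\|\zeta_1(t_1)-\zeta_1(t_2)\|_n\leq\frac{2^n}{e\pi}e^{-\pi T}$, and the same splitting of the metric sum at $N\approx\tau$ yielding a linear modulus $T(\tau)=C\tau$ (the paper finds $C>0.25079$ suffices, so $\zeta_1$ is in fact already $\id$-convergent and the speedup is optional). Where you genuinely depart from the paper is in the treatment of the spatial constants $2^x$ and $\frac{2^x}{x-1}$: you manufacture them as auxiliary effective limits (your ODEs and error bounds for $r$ and $p$ are correct, as is the appeal to Lemma \ref{lem:convmodpseudoequiv} to handle the non-uniform convergence of $p$), whereas the paper simply inserts them as $\xcal$-constant modules, which the model explicitly permits: Section \ref{sec:gammafun} remarks that a constant module exists for any function in $\xcal$, and Figure \ref{fig:rzetalimit} indeed contains constant boxes labelled $-2^x$ and $\frac{2^x}{x-1}$ (step 4 of Lemma \ref{lem:zeta2} likewise uses the constant $-x$). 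Your detour is therefore unnecessary for the theorem as stated, but it does buy something: generability of $\zeta$ from only real constants and the identity function $x$, which blunts the paper's own caveat that arbitrary constant modules can beg the question. Two bookkeeping remarks if you keep your route: the $2^n$ inflation you worry about is cancelled exactly by the weights $2^{-n}$ in the metric \eqref{eq:dfromp2}, as your own summation already shows; and the outputs of your auxiliary limit modules are $\xcal$-scalars, so before entering a multiplier they must be lifted to constant streams, e.g.\ as the output of an integrator whose initial setting is that scalar and whose integrator input $v$ is a real constant stream (so that $dv=0$).
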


\begin{proof}
We can obtain $\zeta_1$ (from \eqref{eq:zeta1}) by feeding $\zeta_2$ (which is L-GPAC-generable by Lemma \ref{lem:zeta2}) into an integrator module and using constants $\frac{2^x}{x-1}$, $2^x$. We can obtain the Riemann zeta function by feeding $\zeta_1$ into an effective limit module. Thus we can devise an L-GPAC that generates $\zeta$, as in Figure \ref{fig:rzetalimit}.

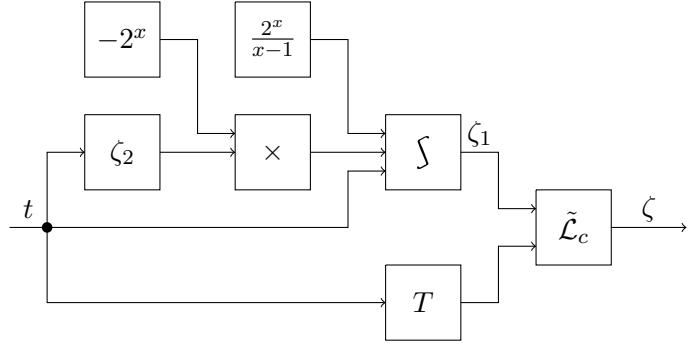
\begin{figure}[ht]\centering
\begin{tikzpicture}
\draw ( 0, 3  ) -- ( 1, 3  ) -- ( 1, 2  ) -- ( 0, 2  ) -- ( 0, 3  );
\draw ( 0, 1.5) -- ( 1, 1.5) -- ( 1,  .5) -- ( 0,  .5) -- ( 0, 1.5);
\draw ( 2, 3  ) -- ( 3, 3  ) -- ( 3, 2  ) -- ( 2, 2  ) -- ( 2, 3  );
\draw ( 2, 1.5) -- ( 3, 1.5) -- ( 3,  .5) -- ( 2,  .5) -- ( 2, 1.5);
\draw ( 4, 1.5) -- ( 5, 1.5) -- ( 5,  .5) -- ( 4,  .5) -- ( 4, 1.5);
\draw ( 4,- .5) -- ( 5,- .5) -- ( 5,-1.5) -- ( 4,-1.5) -- ( 4,- .5);
\draw ( 6,  .5) -- ( 7,  .5) -- ( 7,- .5) -- ( 6,- .5) -- ( 6,  .5);
\draw[->] (-1  , 0  ) -- ( 3.5, 0  ) -- ( 3.5, .75) -- ( 4  , .75);
\draw[->] (- .5, 0  ) -- (- .5, 1  ) -- ( 0  , 1  );
\draw[->] ( 1  , 1  ) -- ( 2  , 1  );
\draw[->] ( 1  , 2.5) -- ( 1.5, 2.5) -- ( 1.5,1.25) -- ( 2  ,1.25);
\draw[->] ( 3  , 1  ) -- ( 4  , 1  );
\draw[->] ( 3  , 2.5) -- ( 3.5, 2.5) -- ( 3.5,1.25) -- ( 4  ,1.25);
\draw[->] (- .5, 0  ) -- (- .5,-1  ) -- ( 4  ,-1  );
\draw[->] ( 5  , 1  ) -- ( 5.5, 1  ) -- ( 5.5, .25) -- ( 6  , .25);
\draw[->] ( 5  ,-1  ) -- ( 5.5,-1  ) -- ( 5.5,-.25) -- ( 6  ,-.25);
\draw[->] ( 7  , 0  ) -- ( 8  , 0  );

\node at (  .5,2.5 ) {$-2^x$};
\node at (  .5,1   ) {$\zeta_2$};
\node at ( 2.5,2.5 ) {$\frac{2^x}{x-1}$};
\node at ( 2.5,1   ) {$\times$};
\node at ( 4.5,1   ) {$\varint$};
\node at ( 4.5,-1  ) {$T$};
\node at ( 6.5, 0  ) {$\tilde{\lcal}_c$};

\node at (- .75, .25) {$t$};
\node at ( 5.25,1.25) {$\zeta_1$};
\node at ( 7.5 , .25) {$\zeta$};

\fill (-.5,0) circle (2pt);
\end{tikzpicture}
\caption[Construction of the Riemann zeta function.]{Construction of the Riemann zeta function; $T$ denotes a suitable continuous modulus of convergence.\label{fig:rzetalimit}}
\end{figure}

The only thing left is to prove the effectiveness of the convergence. In order to do that we shall prove that a linear modulus of convergence $T(\tau)=C\tau$, for a suitable large constant $C$, is sufficient. The following calculations are similar to those done for Lemmas \ref{lem:gamma1conv}, \ref{lem:gamma2conv} and \ref{lem:compgammafun}. To start, we recall that $\xcal=C([2,\infty),\rbb)$ is a Fréchet space with pseudonorms $\ds \|g\|_n=\sup_{2\leq x\leq n}|g(x)|$. Let $T\in\tbb$, $k\in\nbb$ with $k\geq 2$, $x\in[2,k]$ and $t_1,t_2\in\tbb$ with $t_1,t_2\geq T$; then we have the bound
\begin{align*}
|\zeta_1(t_1,x)-\zeta_1(t_2,x)|
&=\left|2^x\int_{t_2}^{t_1}\frac{\sin(x \arctan t)}{(1+t^2)^{x/2}e^{\pi t+1}}dt\right|\\
&<2^x\int_T^\infty\left|\frac{\sin(x \arctan t)}{(1+t^2)^{x/2}e^{\pi t+1}}\right|dt\\
&\leq 2^k\int_T^\infty\frac{1}{e^{\pi t+1}}dt=\frac{2^k}{e\pi}e^{-\pi T}.
\end{align*}
Thus, for any $k\geq 2$ and any $t_1,t_2\geq T(\tau)$ we have

\begin{equation}\label{eq:boundzeta1}\|\zeta_1(t_1)-\zeta_1(t_2)\|_k\leq\frac{2^k}{e\pi}e^{-\pi T}=\frac{2^k}{e\pi}e^{-\pi C\tau}.\end{equation}

Next, let us take $N=\lceil\tau\rceil+1$, so that $\tau+1\leq N<\tau+2$. By splitting the sum, we obtain
\begin{subequations}
\begin{align}
d(\zeta_1(t_1),\zeta_1(t_2))
&=\sum_{n=2}^\infty 2^{-n}\min(\|\zeta_1(t_1)-\zeta_1(t_2)\|_n,1)\\
&\leq\sum_{n=2}^N 2^{-n}\|\zeta_1(t_1)-\zeta_1(t_2)\|_n+\sum_{n=N+1}^\infty 2^{-n}\\
&\leq\sum_{n=2}^N 2^{-n}\frac{2^n}{e\pi}e^{-\pi C\tau}+2^{-N}\label{eq:boundzeta2}\\
&\leq\frac{N-1}{e\pi}e^{-\pi C\tau}+2^{-\tau-1}\\
&<\frac{\tau+1}{e\pi}e^{-\pi C\tau}+2^{-\tau-1}\\
&=\exp\{-\pi C\tau+\log(\tau+1)-\log(e\pi)\}+2^{-\tau-1},\label{eq:boundzeta3}
\end{align}
\end{subequations}
where \eqref{eq:boundzeta2} is justified by \eqref{eq:boundzeta1}. Finally, the expression in \eqref{eq:boundzeta3} can be further taken to be smaller than $\exp\{-\log(2)(\tau+1)\}+2^{-\tau-1}=2^{-\tau}$ for a suitably large $C$ that does not depend on $\tau$, because the term $\pi C\tau$ dominates all other terms (numerically, we have found that $C>0.25079$ suffices). Thus
\[d(\zeta_1(t_1),\zeta_1(t_2))<2^{-\tau},\]
so that $\zeta_1$ is a $T$-convergent Cauchy stream. Incidentally, since $C$ can be chosen to be equal to 1, the stream $\zeta_1$ is $\id$-convergent (meaning that the one-input continuous limit module can be applied directly).
\end{proof}

\section{Conclusion and further work}
In this paper we introduced limit modules to the Shannon GPAC computational model, arriving at a generalization which we called L-GPAC. The main motivation was to prove that some non-differentially algebraic functions such as the gamma function can be generated in this framework. In some sense, that result was obtained before (see \cite{graca:04}) by changing the notion of GPAC-generability to allow for approximability of functions.

The idea of \emph{approximability} is a cornerstone in many models of computability on continuous spaces, especially those that use classically computable functions (i.e. computable functions on the naturals) as a starting point. This is a consequence of the fact that many continuous spaces are typically represented using a dense countable subset and codes of convergent sequences. Then, to say that a function is computable is to assert that its values can be obtained up to a prescribed precision in an effective way.

As we have seen, the limit module is an operation of type $C(\tbb,\xcal)\rightarrow \xcal$ whose output is of `one arity less' than the input. This can be seen as the reverse of the integrator module, whose initial constant $g\in \xcal$ is of `one arity less' than the output in $C(\tbb,\xcal)$. This suggests one possible way of defining a hierarchy of `computable functions' on $\xcal$ based on the $\xcal$-GPAC model presented in \cite{pocaszucker:18}, which we sketch as follows (see also Figure \ref{fig:lgpacrec}):

\begin{enumerate}
\item Assume $\xcal=C(\rbb,\rbb)$ and take the subset $\xcal_0\subseteq \xcal$ of Shannon GPAC-generable functions (ignore momentarily the fact that $C(\rbb,\rbb)$ is not the same as the class $C^1(\tbb,\rbb)$ appearing in the Shannon GPAC); in this way, $\xcal_0$ corresponds to the class of differentially algebraic real functions as proven by Shannon and others.
\item Using $\xcal_0$ as a class of `valid initial constants' for integration in an $\xcal$-GPAC, define a class of $\xcal$-GPAC-generable functions in $C([0,\infty),\xcal)$.
\item Using continuous limit modules (that is, the L-GPAC framework), define a subclass $\xcal_1\subseteq \xcal$ of valid limits of the $\xcal$-GPAC-generable functions from the previous step; observe that this new class contains the gamma and Riemann zeta function, so $\xcal_1$ is strictly larger than $\xcal_0$.
\item The procedure can be iterated to get a hierarchy $\xcal_0\subseteq \xcal_1\subseteq \xcal_2\subseteq\ldots\subseteq \xcal$ of `computable functions' on $\xcal$.
\end{enumerate}

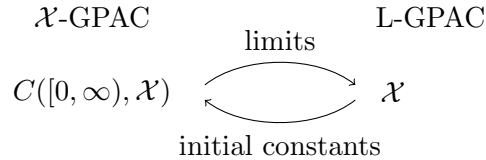
\begin{figure}[ht]
\centering
\begin{tikzpicture}
\draw[->] (-1, .1) .. controls (-.5, .5) and ( .5, .5) .. ( 1, .1);
\draw[->] ( 1,-.1) .. controls ( .5,-.5) and (-.5,-.5) .. (-1,-.1);
\node at (-2.5,1) {$\xcal$-GPAC}; 
\node at (-2.5,0) {$C([0,\infty),\xcal)$};
\node at ( 2  ,1) {L-GPAC};
\node at ( 1.5,0) {$\xcal$};
\node at (0, .7) {limits};
\node at (0,-.7) {initial constants};
\end{tikzpicture}
\caption{Recursive definition of a hierarchy of GPAC-generable functions.\label{fig:lgpacrec}}
\end{figure}

\noindent We leave as an open problem the task of defining this hierarchy precisely and studying its properties. We conjecture that this hierarchy actually collapses at level $n=1$, which would mean that using the limit operation just once is enough to capture the full class of L-GPAC-generable functions. Our intuition comes from the discrete case, where we know that, given an effective sequence of fast effective sequences of approximations, a diagonal argument can produce a new fast effective sequence. This argument also appears in the framework of $\alpha$-tracking computability, where it is used to show that $C_{\bar{\alpha}}(C_{\bar{\alpha}}(X))=C_{\bar{\alpha}}(X)$ \cite[Remark 8.1.1]{tuckerzucker:04} and it should be applicable to the case of continuous limits as well. Nevertheless, we hope that the union $\ds \bigcup_{n\in\nbb} \xcal_n$ is different from $\xcal$, in order to have a non-trivial model of computation.

Another direction for further research would consist in comparing our model of computation (the L-GPAC) with other models of computability in continuous spaces. For example, we could look at the notion of tracking computability presented in \cite{tuckerzucker:04} and find out suitable conditions under which the functions generated by a GPAC are tracking computable (and vice-versa). There is a large volume of research and literature dedicated to the task of defining an analog counterpart to the Church-Turing thesis, and this could be considered as an important step towards that goal.

We also remark that similar results have been achieved; for example, the paper \cite{bournezetal:07} already shows an equivalence between a GPAC model (which includes approximability) and computable analysis. Thus, with some care, their techniques may be adaptable to our framework.

We hope that in tackling these problems new insights can be acquired about the power of analog networks, and in particular the GPAC, as a model of analog computability.

\textbf{Acknowledgements.} The research of Diogo Poças was funded by Fundação para a Ciência e Tecnologia, Portugal (Doctoral Grant) and by the Alexander von Humboldt Foundation. The research of Jeffery Zucker was funded by the Natural Sciences and Engineering Research Council of Canada.

\bibliographystyle{alpha}
\bibliography{references.bib}{}

\begin{thebibliography}{BCGH07}

\bibitem[Abe65]{abel:65}
Niels~Henrik Abel.
\newblock Solution de quelques problèmes à l’aide d’intégrales
  définies.
\newblock In Ludwig Sylow and Sophus Lie, editors, {\em Oeuvres complètes
  d’Abel}, volume~I, pages 11--27. Johnson, New York, 1965.
\newblock Reprint of the Nouvelle éd., Christiania, 1881.

\bibitem[BCGH06]{olivieretal:06}
Olivier Bournez, Manuel~L. Campagnolo, Daniel~S. Gra{\c c}a, and Emmanuel
  Hainry.
\newblock The general purpose analog computer and computable analysis are two
  equivalent paradigms of analog computation.
\newblock In {Jin-yi} Cai, S.~Barry Cooper, and Angsheng Li, editors, {\em
  Theory and Applications of Models of Computation, Third International
  Conference, {TAMC} 2006, Beijing, China, May 15-20, 2006, Proceedings},
  volume 3959 of {\em Lecture Notes in Computer Science}, pages 631--643.
  Springer, 2006.

\bibitem[BCGH07]{bournezetal:07}
Olivier Bournez, Manuel~L. Campagnolo, Daniel~S. Graça, and Emmanuel Hainry.
\newblock Polynomial differential equations compute all real computable
  functions on computable compact intervals.
\newblock {\em Journal of Complexity}, 23(3):317--335, 2007.

\bibitem[Bus31]{bush:31}
Vannevar Bush.
\newblock The differential analyzer. a new machine for solving differential
  equations.
\newblock {\em Journal of the {F}ranklin {I}nstitute}, 212(4):447--488, 1931.

\bibitem[GC03]{gracacosta:03}
Daniel {Gra\c ca} and {Jos\'e}~{F\'elix} Costa.
\newblock Analog computers and recursive functions over the reals.
\newblock {\em Journal of {C}omplexity}, 19(5):644--664, 2003.

\bibitem[{Gra}04]{graca:04}
Daniel {Gra\c ca}.
\newblock Some recent developments on shannon's general purpose analog
  computer.
\newblock {\em Mathematical Logic Quarterly}, 50(4--5):473--485, 2004.

\bibitem[Grz55]{grzegorczyk:55}
A.~Grzegorczyk.
\newblock Computable functions.
\newblock {\em Fundamenta Mathematicae}, 42:168--202, 1955.

\bibitem[Grz57]{grzegorczyk:57}
A.~Grzegorczyk.
\newblock On the defintions of computable real continuous functions.
\newblock {\em Fundamenta Mathematicae}, 44:61--71, 1957.

\bibitem[Har50]{hartree:50}
Douglas~R. Hartree.
\newblock {\em Calculating instruments and machines}.
\newblock Cambridge University Press, 1950.

\bibitem[H{\"o}l86]{holder:86}
Otto H{\"o}lder.
\newblock Ueber die eigenschaft der gammafunction keiner algebraischen
  differentialgleichung zu gen{\"u}gen.
\newblock {\em Mathematische Annalen}, 28(1):1--13, 1886.

\bibitem[Hol96]{holst:96}
Per~A. Holst.
\newblock Svein {R}osseland and the {O}slo {A}nalyzer.
\newblock {\em IEEE Annals of the History of Computing}, 18(4):16--26, 1996.

\bibitem[Joh96]{johansson:96}
Magnus Johansson.
\newblock Early analog computers in {S}weden - with examples from {C}halmers
  {U}niversity of {T}echnology and the {S}wedish aerospace industry.
\newblock {\em IEEE Annals of the History of Computing}, 18(4):27--33, 1996.

\bibitem[JZ13]{jameszucker:13}
Nick~D. James and Jeffery~I. Zucker.
\newblock A class of contracting stream operators.
\newblock {\em The Computer Journal}, 56:15--33, 2013.

\bibitem[Ko91]{ko:91}
Ker-I Ko.
\newblock {\em Complexity Theory of Real Functions}.
\newblock Birk{\"a}user, 1991.

\bibitem[Lac55a]{lacombe:55a}
D.~Lacombe.
\newblock Extension de la notion de fonction récursive aux fonctions d'une ou
  plusieurs variables réelles i.
\newblock {\em Comptes Rendus des Séances d l'Académie des Sciences, Paris},
  240:2478--2480, 1955.

\bibitem[Lac55b]{lacombe:55b}
D.~Lacombe.
\newblock Extension de la notion de fonction récursive aux fonctions d'une ou
  plusieurs variables réelles ii.
\newblock {\em Comptes Rendus des Séances d l'Académie des Sciences, Paris},
  241:13--14, 1955.

\bibitem[Lac55c]{lacombe:55c}
D.~Lacombe.
\newblock Extension de la notion de fonction récursive aux fonctions d'une ou
  plusieurs variables réelles iii.
\newblock {\em Comptes Rendus des Séances d l'Académie des Sciences, Paris},
  241:151--153, 1955.

\bibitem[LR87]{lipshitzrubel:87}
Leonard Lipshitz and Lee Rubel.
\newblock A differentially algebraic replacement theorem.
\newblock {\em Proceedings of the American Mathematical Society},
  99(2):367--372, 1987.

\bibitem[OLBC10]{olveretal:10}
Frank W.~J. Olver, Daniel~W. Lozier, Ronald~F. Boisvert, and Charles~W. Clark.
\newblock {\em NIST Handbook of Mathematical Functions}.
\newblock Cambridge University Press, 2010.

\bibitem[PE74]{pourel:74}
Marian Pour-El.
\newblock Abstract computability and its relations to the general purpose
  analog computer.
\newblock {\em Transactions of the American Mathematical Society}, 199:1--28,
  1974.

\bibitem[PER79]{pourelrichards:79}
Marian Pour-El and Ian Richards.
\newblock A computable ordinary differential equation which possesses no
  computable solution.
\newblock {\em Annals of Mathematical Logic}, 17:61--90, 1979.

\bibitem[Pla20]{plana:20}
Giovanni Antonio~Amedeo Plana.
\newblock Sur une nouvelle expression analytique des nombres bernoulliens,
  propre à exprimer en termes finis la formule générale pour la sommation
  des suites.
\newblock {\em Mem. Accad. Sci. Torino}, 1(25):403--418, 1820.

\bibitem[PZ16]{pocaszucker:16}
Diogo Poças and Jeffery Zucker.
\newblock Fixed point techniques in analog systems.
\newblock {\em Mathematical and Computational Approaches in Advancing Modern
  Science and Engineering}, pages 701--711, 2016.

\bibitem[PZ18]{pocaszucker:18}
Diogo Poças and Jeffery Zucker.
\newblock Analog networks in function data spaces.
\newblock {\em Computability}, 7(4):301--322, 2018.

\bibitem[RS80]{reedsimons:80}
Michael Reed and Barry Simon.
\newblock {\em Methods of modern mathematical physics: Functional analysis}.
\newblock Academic Press, Inc, 1980.

\bibitem[Rud76]{rudin:76}
Walter Rudin.
\newblock {\em Principles of Mathematical Analysis}.
\newblock International Series in Pure and Applied Mathematics. McGraw-Hill,
  3rd edition, 1976.

\bibitem[Sha41]{shannon:41}
Claude Shannon.
\newblock Mathematical theory of the differential analyser.
\newblock {\em Journal {M}athematical {P}hysics}, 20:337--354, 1941.

\bibitem[SHT99]{hansentucker:99}
Viggo Stoltenberg-Hansen and John Tucker.
\newblock Concrete models of computation for topological algebras.
\newblock {\em Theoretical Computer Science}, 219:347--378, 1999.

\bibitem[Sma93]{small:93}
James~S. Small.
\newblock General-purpose electronic analog computing: 1945-1965.
\newblock {\em IEEE Annals of the History of Computing}, 15(2):8--18, 1993.

\bibitem[TT80]{thompsontait:80}
William Thompson and Peter~G. Tait.
\newblock {\em Treatise on Natural Philosophy}, volume~1.
\newblock Cambridge University Press, 2nd edition, 1880.
\newblock Part I.

\bibitem[TZ04]{tuckerzucker:04}
John~V. Tucker and Jeffery~I. Zucker.
\newblock Abstract versus concrete computation on metric partial algebras.
\newblock {\em ACM Transactions on Computational Logic}, 5:611--668, 2004.

\bibitem[TZ07]{tuckerzucker:07}
John~V. Tucker and Jeffery~I. Zucker.
\newblock Computability of analog networks.
\newblock {\em Theoretical Computer Science}, 371:115--146, 2007.

\bibitem[TZ11]{tuckerzucker:11}
John~V. Tucker and Jeffery~I. Zucker.
\newblock Continuity of operators on continuous and discrete time streams.
\newblock {\em Theoretical Computer Science}, 412:3378--3403, 2011.

\bibitem[TZ14]{tuckerzucker:14}
John~V. Tucker and Jeffery~I. Zucker.
\newblock Computability of operators on continuous and discrete time streams.
\newblock {\em Computability}, 3:9--44, 2014.

\bibitem[Wei00]{weihrauch:00}
Klaus Weihrauch.
\newblock {\em Computable Analysis --- An Introduction}.
\newblock Texts in Theoretical Computer Science. Springer-Verlag Berlin
  Heidelberg, 2000.

\end{thebibliography}

\end{document}